\newtheorem{theorem}{Theorem}
\newtheorem{corollary}{Corollary}
\newtheorem{lemma}{Lemma}
\def\forigin{0.1}
\def\forighei{1.8}
\def\fchwidthsh{1.9}
\def\foffonesh{2.8}
\def\foffonevs{2.5} 
\def\fchwidthvs{1.5} 
\def\fchwidthcfd{1.1}
\def\foffonecfd{1.5}
\def\fchwidthsfd{1.1}
\def\foffonesfd{1.2}
\def\foffonecma{2.8}
\def\fchwidthcma{1.9}
\def\fdheightcma{1.8}
\def\foffonemrc{2.8}
\def\fchwidthmrc{1.9}
\def\fdheightmrc{1.4}
\def\foffonetsm{2.4}
\def\fchwidthtsm{1.9}
\def\fdheighttsm{1.6}
\def\offlab{0.25}
\def\fdheight{1.2}
\def\orighei{1.8}
\begin{document}
%
\title{Compress-Forward Schemes for General Networks}
%
%
%
\author{\IEEEauthorblockN{Jonathan Ponniah~\IEEEmembership{Member,~IEEE,}
}
\IEEEauthorblockA{Department of Electrical Engineering\\
San Jose State University
}
\thanks{This material is based upon work partially supported by NSF Contract CNS-1302182,  AFOSR Contract FA9550-13-1-0008, and NSF Science \& Technology Center Grant CCF-0939370.}}

%



\maketitle

\begin{abstract}
Compress-forward (CF) schemes are studied in general networks.  The CF rate for the one-relay channel defines outerbounds on both the CF rate for general networks and the compression rate-vector region supporting this rate.  We show the outerbound is achievable using regular decoding with constant encoding delays, avoiding the exponential delays and restrictions on bidirectional communication in noisy network coding and backward decoding.  The concept of \textit{layering} is introduced to harmonize regular CF schemes with the framework of \textit{flow decomposition} in the decode-forward setting.  Layerings correspond to regular decoding schemes.  Any desired compression rate-vector in the outerbound is achievable by some layering, which is found using the same ``shift'' operation in flow decomposition.  In separate work, we show that ``shifting'' minimizes the operations needed to find layerings and thus minimizes the complexity of the compression rate-vector region.
\end{abstract}


%
\IEEEpeerreviewmaketitle

\section{Introduction}
Alice, Boris, and Charlie work at different stations on the floor of a noisy machine shop.  Charlie can't quite hear the instructions Alice yells from across the floor.  Boris, who doesn't speak English, tries to help by yelling the closest-sounding Russian words to the English words he hears from Alice.  Charlie attempts to decipher Alice's instructions from the combined vocal output of Alice and Boris.  This contrived scenario is a loose description of the original compress-forward (CF) scheme for the three-node relay channel, where a source (Alice) communicates through a noisy channel to a destination (Charlie), with the help of a relay (Boris).  Another scheme called decode-forward (DF) covers the more natural scenario in which a bilingual Boris yells out Alice's instructions in English.  Of course, Boris must be able to accurately decode these instructions in the first place.  Neither scheme is universally better than the other.

Although its capacity is unknown, the relay channel is still a useful building block for studying general networks from an information-theoretic perspective.  Network coding, which is provably optimal in single-source multi-cast networks with noiseless links \cite{LiNetworkCoding2003}, is a special case of CF \cite{noisynetworkcoding}.  The “spirit” of network coding is also present in DF schemes through index-coding \cite{Xie2007}, a related strategy with a long history in networking and communication \cite{indexcoding}.  Issues affecting DF and CF schemes thus have profound implications for general networks and must be understood fully.

The first issue is that DF schemes in bidirectional networks are subject to a fundamental tension in which each relay has an incentive to wait for others to decode first \cite{Ponniah2008}.  This tension completely disables “backward-decoding”, a powerful information-theoretic technique \cite{XieKumar}.  For instance, Noisy Network Coding (NNC) which extends network coding to general channels, belongs to the family of backward-decoding schemes.  Although CF schemes are not directly subject to the tension in DF schemes, both are linked in joint DF-CF schemes which are not universally beaten by either in isolation.  

A second issue is that backward-decoding schemes experience long encoding delays (exponential in the channel usage), which render them functionally impractical.  Another strategy called “regular coding” has short encoding delays (constant in channel usage) and supports bidirectional communication.  However, regular coding generally achieves lower rates than backward decoding, which leads to a third issue; the rate achieved by any proposed family of CF schemes in general networks, must be “optimal” in some sense.  Optimality is a delicate term because the capacity regions of all general multi-terminal channels are unsolved.  Nevertheless, it is desirable to separate networking problems from the ``physical layer'' problems not fully understood in multi-terminal channels.  

A fourth issue is the complexity of any achievable region; the number of operations required to verify whether a desired rate-vector is included in the region (also of importance is the number of operations required to find the scheme that achieves a desired rate-vector).  In the CF setting, complexity arises from deciding whether the compression rate-vector  supporting the actual information rate, is achievable.   

We address these issues in this paper (touching upon complexity but leaving fuller treatment for separate work).  The mutual-information constraints derived from the one-relay CF rate, define an outer-bound on the CF rate for general networks.  We show these constraints are also sufficient with regular coding schemes, avoiding the long encoding delays and restrictions on bidirectional communication in backward-decoding.  This result implies the CF rate has minimal complexity.  

The key idea is \textit{layerings}; ordered partitions of nodes that correspond to regular decoding schemes.  Layerings are also present in the DF setting, accompanied by so-called flows; the sequences of nodes or routes that forward messages from each source \cite{flowdecomparxiv}.  Flows do not appear in the CF setting because nodes do not forward actual messages, only ``Boris-style'' approximations of the ``sounds they hear''.  Hence, the proof in this paper is a greatly simplified version of the proof in the DF setting, and in fact, a useful intermediary with similar lemmas, claims, and sub-claims.  As in \cite{flowdecomparxiv}, we define a shift operation that alters layerings so their achievable regions are closer to a target compression rate-vector.  For any arbitrary target satisfying the CF outer-bound, we construct a sequence of shifted layerings and show the target is eventually included in the achievable region.  

The rest of the paper is organized as follows:  Section \ref{background} provides a survey of some previous work.  Section \ref{OutlineAndPreliminaries} provides a high-level overview of the main results and an outline of the proof.  The concept of layering is introduced in Section \ref{layeredpartitions}, and the main result is presented in Section \ref{mainresult}.  Section \ref{prooftheoremone} includes the proof and Section \ref{conclusion} concludes the paper.

\section{Literature Review}
\label{background}
The relay channel was introduced in \cite{Meulen1971}.  CF and DF schemes for the one-relay channel were proposed in \cite{GamalCover}.  The original DF scheme, which combined super-position coding, random binning, and list-decoding, was simplified and streamlined for multi-relay channels in \cite{KramerGastpar} and \cite{XieMultiLevel}.  The binning and list decoding strategy was replaced by a joint typicality decoding scheme called sliding-window decoding, first used for the “multiple-access channel with generalized feedback” in \cite{Carleial1982}.  Sliding-window decoding belongs to the regular decoding family of schemes used in this paper.  A fundamentally different scheme from sliding window decoding called backward-decoding, was first proposed for the “multiple-access channel with cribbing encoders” \cite{willems}.  

Noisy Network Coding (NNC) \cite{noisynetworkcoding} is a CF scheme that generalizes the network coding scheme in \cite{Avestimehr2011}.  It turns out that backward-decoding generalizes NNC \cite{Wu2013}\cite{HouKramer}.  Backward-decoding achieves higher rates than sliding-window decoding in general multi-source multi-relay channels \cite{XieKumar}, but requires much longer encoding delays.  A variation of regular coding called ``offset encoding'' was proposed to circumvent the delay problem in the multiple-access relay channels (MARC) \cite{Sankar}.  Three different offset encoding schemes collectively achieve the same region as backward-decoding in \cite{Sankar} thus solving the delay problem for the MARC.  

Backward-decoding cannot support bidirectional communication in the DF setting \cite{XieKumar} due to the tension between relays.  The offset-encoding scheme was first applied in the two-way two-relay channel, in an attempt to avoid this tension \cite{Ponniah2008}.  A parallel and independent effort applied offset-encoding to the CF setting \cite{Yassaee}.  Key components of our framework appear in \cite{Yassaee} including the mapping between layerings and regular decoding schemes, the typicality checks and associated error probabilities, and the description of the CF rate and compression rate-region.  However, the proof in \cite{Yassaee} omits details that call into question the viability of the overall approach.  By contrast, the proof we present relies on the ``shift'' operation in \cite{flowdecomparxiv}.

The CF schemes here and in \cite{Yassaee} rely on regular coding, and open the door to joint DF-CF schemes that avoid the long encoding delays and restrictions on bidirectional communication in joint schemes based on backward decoding \cite{Wu2014}\cite{Hou2016}. 

\section{Outline and Preliminaries}
\label{OutlineAndPreliminaries}
We provide a rough overview of the proof and main result, saving the rigor for the sections to follow.  The one-relay channel consists of a source (node 1), a destination (node 3), and a relay (node 2).  The discrete memoryless channel  $({\cal X}_{1}\times{\cal X}_{2}, p(y_{3},y_{2}|x_{1},x_{2}),{\cal Y}_{2}\times{\cal Y}_{3})$ models the channel dynamics.
For $p=p(x_{1})p(x_{2})p(\hat{y}_{2}|x_{2},y_{2})p(y_{3},y_{2}|x_{1},x_{2})$ any rate satisfying, 
\begin{align}
\label{CFrate}
R<\max_{p}I(X_{1};\hat{Y}_{2}Y_{3}|X_{2}) 	
\end{align}
is achievable \cite{GamalCover} provided:
\begin{align}
\label{singlerelay}
	I(\hat{Y}_{2};Y_{2}|X_{2}Y_{3})\leq I(X_{2};Y_{3}).
\end{align}
The constraint in (\ref{singlerelay}) bounds the relay compression rate while the (\ref{CFrate}) bounds the source message rate.  
Now consider a channel with nodes ${\cal N}=\{1,\ldots,|{\cal N}|\}$.  Let ${\bf y}_{\cal N}:=(y_{1},\ldots,y_{|{\cal N}|})$ and ${\bf x}_{\cal N}:=(x_{1},\ldots,x_{|{\cal N}|})$.  The input-output dynamics conform to the discrete memoryless channel:
\begin{align}
	\label{discretememorylesschannel}
	(\displaystyle\prod_{i\in {\cal N}}{\cal X}_{i},\hspace{1mm}p({\bf y}_{\cal N}|{\bf x}_{\cal N}),\displaystyle\prod_{i\in{\cal N}}{\cal Y}_{i}).
\end{align}
The constraints in (\ref{CFrate}) and (\ref{singlerelay}) define the following outer-bound on the CF rate for general networks.  Let ${\cal S}(d):=\{2,\ldots,d-1\}$ and $d=|{\cal N}|$.  For $p:=p(x_{1})[\prod_{i\in{\cal S}(d)}p(x_{i})p(\hat{y}_{i}|x_{i}y_{i})]p({\bf y}_{{\cal N}}|{\bf x}_{{\cal N}})$ the source rate must satisfy:
\begin{align}
	\label{CFouterbound}
	R<\max_{p}I(X_{1};\hat{Y}_{{\cal S}(d)}Y_{d}|X_{{\cal S}(d)}),
\end{align}
where: 
\begin{align}
\nonumber
&I(\hat{Y}_{S};Y_{S}|X_{{\cal S}(d)}\hat{Y}_{{\cal S}(d)\setminus S}Y_{d})\\
\label{compressionouter}
&\hspace{30mm}\leq I(X_{S};\hat{Y}_{{\cal S}(d)\setminus S}Y_{d}|X_{{\cal S}(d)\setminus S}),	
\end{align}
for all $S\subseteq{\cal S}(d)$.  The main result in Theorem \ref{theoremone} is that (\ref{CFouterbound}) and (\ref{compressionouter}) are also sufficient, which also implies they have minimal complexity.

A layering ${\bf L}_{d}$ is an ordered partition of ${\cal S}(d)$ that defines a scheme for decoding relay compressions.  For any compression rate vector ${\bf\hat{R}}$ satisfying (\ref{compressionouter}), we show there is a layering that achieves ${\bf\hat{R}}$.  To prove Theorem \ref{theoremone}, we pick an arbitrary ${\bf\hat{R}}$ consistent with (\ref{compressionouter}) and an arbitrary layering ${\bf L}_{d}$.  If ${\bf L}_{d}$ does not achieve ${\bf\hat{R}}$, we define the following ``shift'' operation:
\begin{align}
{\bf L}^{\prime}_{d}&=\text{\sc shift}({\bf L}_{d},S),
\end{align}
where $S$ is a selected subset of relays whose compressions are decoded by node $d$.  Lemma \ref{induction} shows that the compression rate region achieved by ${\bf L}^{\prime}_{d}$ is closer to ${\bf\hat{R}}$ than ${\bf L}_{d}$.  The proof of Lemma \ref{induction} relies on Lemmas \ref{disjoint}, \ref{remain} and \ref{join}.

Next, we create a sequence of layerings $\{{\bf L}_{d,n}:n\in\mathbb{N}\}$, where ${\bf L}_{d,n+1}=\text{\sc shift}({\bf L}_{d,n}, S_{n})$ and $\{S_{n}:n\in\mathbb{N}\}$ is a selected sequence of subsets of ${\cal S}(d)$.  Lemma \ref{grandefinale} proves there is some $n^{*}\in\mathbb{N}$, such that ${\bf L}_{d,n^{*}}$ achieves ${\bf\hat{R}}$.  The proof of Lemma \ref{grandefinale} uses Lemma \ref{induction}.  We aim to emulate the DF setting in \cite{flowdecomparxiv} as much as possible, including the labels assigned to claims and sub-claims.  Occasionally, certain label indices might be skipped to maintain this correspondence.   

The following definition of typicality is used.  Let $X_{{\cal N}}:=\{X_{i}:i\in{\cal N}\}$ denote a finite collection of discrete random variables  with a fixed joint distribution $p(x_{{\cal N}})$ for some $x_{{\cal N}}:=\{x_{i}\in{\cal X}_{i}:i\in{\cal N}\}$.  Similarly, let ${\bf x}_{i}:=\{x^{(m)}_{i}\in{\cal X}_{i}:1\leq m\leq n\}$ denote an $n$-length vector of ${\cal X}_{i}$ and let ${\bf x}_{{\cal N}}:=\{{\bf x}_{i}:i\in{\cal N}\}$.  The set of typical $n$-sequences is given by: 
\begin{align}
	\nonumber
	&\hspace{-1mm}T^{(n)}_{\epsilon}(X_{{\cal N}}):=\\
	\nonumber
	&\hspace{7mm}\bigg\{{\bf x}_{{\cal N}}:\left|-\frac{1}{n}\log\text{Prob}({\bf x}_{S})-H(X_{S})\right|<\epsilon, \forall S\subseteq{\cal N}\bigg\},
\end{align}
where $\text{Prob}({\bf x}_{S}):=\prod^{n}_{m=1}p(x^{(m)}_{S})$.

\section{Layerings}
\label{layeredpartitions}
The relays encode their compressions as per Wyner-Ziv; node $i\in{\cal S}(d)$ is assigned a bin codebook and an indexed family of compression codebooks, each generated from a bin codeword.  The compression codewords in each codebook are evenly distributed into bins.  At the end of each block, node $i\in{\cal S}(d)$ finds a compression codeword jointly typical with its observations, then transmits the corresponding bin index in the next block.  Node $i$ selects the compression codebook that corresponds to the bin index it transmits in the current block.
  
\begin{itemize}
	\item For each node $i\in{\cal S}(d)$, generate a bin codebook of $2^{nR_{i}}$ i.i.d codewords ${\bf x}_{i}(m)$ according to $p(x_{i})$, where $m\in\{1,\ldots,2^{nR_{i}}\}$ denotes the bin index. 
	\item For each bin codeword ${\bf x}_{i}(m)$, generate a compression codebook of $2^{n\hat{R}_{i}}$ i.i.d codewords ${\bf \hat{y}_{i}}(w|m)$ according to $p(\hat{y}_{i}|x_{i})$, where $w\in\{1,\ldots,2^{n\hat{R}_{i}}\}$ denotes the compression index.
	\item Distribute the compression codewords in this codebook evenly into bins indexed by $m_{i}\in\{1,\ldots,2^{nR_{i}}\}$.  Assign the compression codewords in each bin an index $m^{\prime}_{i}\in\{1,\ldots,2^{n(\hat{R}_{i}-R_{i})}\}$ so that $w_{i}:=(m^{\prime}_{i},m_{i})$.
\end{itemize}

Encoding occurs over $B$ blocks of $n$ channel uses.  By assumption, node $i$ knows the bin index $m_{i}(b-1)$ it will transmit at the start of block $b$.
\begin{itemize}
	\item In block $b$, node $i$ sends the bin codeword ${\bf x}_{i}(m_{i}(b-1))$.
	\item At the end of block $b$, node $i$ finds the compression codeword ${\bf\hat{y}_{i}}(w_{i}(b)|m_{i}(b-1))$ jointly typical with its observed sequence ${\bf y_{i}}(b)$ where $w_{i}(b):=(m^{\prime}_{i}(b),m_{i}(b))$.
	\item Node $i$ will find such a compression codeword with high probability if $\hat{R}_{i}>I(\hat{Y}_{i};Y_{i}|X_{i})$.
\end{itemize}

The destination decoding scheme relies on \textit{layerings} which are ordered partitions of the relay nodes.  A layering ${\bf L}_{d}:=(L_{0},L_{1},\ldots,L_{|{\bf L}_{d}|-1})$ of ${\cal S}(d)$ satisfies the following conditions by definition:

{(L1)} $L_{l}\subseteq{\cal S}(d)$ for every $l=0,\ldots,|{\bf L}_{d}|-1$,

{(L2)} $L_{l}\cap L_{q}=\{\}$ for $l\neq q$,

{(L3)} ${\cal S}(d)=\cup^{|{\bf L}_{d}|-1}_{l=0}L_{l}$,

{(L4)} $L_{|{\bf L}_{d}|-1}\neq\{\}$,

{(L5)} $\text{\sc layer}(i)=l$ if $i\in L_{l}$.

The sets in ${\bf L}_{d}$ can be empty provided (L4) is satisfied.  The ``order'' of a layer is reference to time; deeper layers refer to observations deeper in the past.  

In block $b$, the destination decodes the compression vector ${\bf w}(b):=(w_{2},\ldots,w_{d-1})$ where for every $i\in{\cal S}(d)$: 
\begin{align}
\label{wdb}
w_{i}:=w_{i}(b-\text{\sc layer}(i)-1).	
\end{align}
Two rules characterize the decoding scheme.  First, the destination always decodes the bin index of a compression before the compression itself, using the shared correlation between its observations and the compression to identify the compression from the bin.  Second, the compressions of relays in shallow layers help the destination decode the compressions of relays in deeper layers.
  
For any subset $S\subseteq{\cal S}(d)$ and $0\leq l\leq |{\bf L}_{d}|-1$, let:
\begin{align}
	\label{A}
	A_{l}(S)&:=S\cap L_{l},\\
	\label{Atilde}
	\tilde{A}_{l}(S)&:=\{\cup^{l}_{q=0}L_{q}\}\setminus A_{l}(S).
\end{align} 

The set $A_{l}(S)$ is the subset of $S$ ``active'' in layer $l$.  The dependence of $A_{l}(\cdot)$ and $\tilde{A}_{l}(\cdot)$ on a particular ${\bf L}_{d}$ is implied.  In block $b$, the destination node $d$ decodes ${\bf w}(b)$ as defined in (\ref{wdb}) by finding the compression vector ${\bf\hat{w}}(b):=(\hat{w}_{2},\ldots,\hat{w}_{d-1})$ that satisfies the following typicality checks for $0\leq l\leq |{\bf L}_{d}|$:
\begin{align}
\nonumber
	&\hspace{-3.5mm}(\{{\bf x}_{i}(\hat{m}_{i}):i\in A_{l}({\cal N})\},\\
	\nonumber
	&\hspace{1.4mm}\{{\bf\hat{y}}_{i}(\hat{w}_{i}|m_{i}(b-l-2)):i\in A_{l-1}({\cal N})\},\\
	\nonumber
	&\hspace{1.4mm}\{{\bf X}_{i}(b-l):i\in\tilde{A}_{l}({\cal N})\},\\
	\nonumber
	&\hspace{1.4mm}\{{\bf\hat{Y}}_{i}(b-l):i\in\tilde{A}_{l-1}({\cal N})\},{\bf Y}_{d}(b-l)\})\\
	\label{typpie}
	&\hspace{1.4mm}\in T^{(n)}_{\epsilon}(X_{\{A_{l}({\cal N})\cup\tilde{A}_{l}({\cal N})\}},\hat{Y}_{\{A_{l-1}({\cal N})\cup\tilde{A}_{l-1}({\cal N})\}},Y_{d}),
\end{align}
where $\hat{w}_{i}:=(\hat{m}^{\prime}_{i},\hat{m}_{i})$ for all $i\in{\cal S}(d)$.  
An error occurs if some subset $S\subseteq{\cal S}(d)$ of the compression estimates $\{\hat{w}_{i}: i\in S\}$  are incorrect.  For each typicality check $l=0,\ldots,|{\bf L}_{d}|$, the number of jointly typical codewords $(\{{\bf x}_{i}:i\in A_{l}(S)\},\{{\bf\hat{y}}_{i}:i\in A_{l-1}(S)\})$ conditioned on $\{{\bf X}_{i}:i\in\tilde{A}_{l}(S)\}$, $\{\hat{\bf Y}_{i}:i\in\tilde{A}_{l}(S)\}$, and ${\bf Y}_{d}$ is approximately: 
\begin{align}
\nonumber
\exp_{2}(nH(X_{A_{l}(S)}\hat{Y}_{A_{l-1}(S)}|X_{\tilde{A}_{l}(S)}\hat{Y}_{\tilde{A}_{l-1}(S)}Y_{d}))	
\end{align}
The probability that independent codewords in $(\{{\bf x}_{i}:i\in A_{l}(S)\},\{{\bf\hat{y}}_{i}:i\in A_{l-1}(S)\})$ are jointly typical with $\{{\bf X}_{i}:i\in\tilde{A}_{l}(S)\}$, $\{\hat{\bf Y}_{i}:i\in\tilde{A}_{l}(S)\}$, and ${\bf Y}_{d}$ is approximately:

\vspace{3mm}
$\hspace{9mm}\exp_{2}(n(H(X_{A_{l}(S)}\hat{Y}_{A_{l-1}(S)}|X_{\tilde{A}_{l}(S)}\hat{Y}_{\tilde{A}_{l-1}(S)}Y_{d}))$

$\hspace{13mm}\times\exp_{2}(-n(\sum_{i\in A_{l}(S)}H(X_{i})))$ 

$\hspace{13mm}\times\exp_{2}(-n(\sum_{i\in A_{l-1}(S)}H(\hat{Y}_{i}|X_{i}))).$ 

\vspace{3mm}
The probability that the incorrect compression estimates $\{\hat{w}_{i}: i\in S\}$ will independently pass each of the typicality checks from $l=0,\ldots,|{\bf L}_{d}|$ is:

\vspace{3mm}
$\exp_{2}(n(\sum^{|{\bf L}_{d}|}_{l=0}H(X_{A_{l}(S)}\hat{Y}_{A_{l-1}(S)}|X_{\tilde{A}_{l}(S)}\hat{Y}_{\tilde{A}_{l}(S)}Y_{d})))$

$\hspace{5mm}\times\exp_{2}(-n(\sum_{i\in S}H(X_{i}\hat{Y}_{i})))$

\vspace{3mm}
Define $\hat{R}_{S}:=\sum_{i\in S}\hat{R}_{i}$.  Since there are $2^{n\hat{R}_{S}}$ possible compression  codewords corresponding to the subset $S$, the probability of error goes to zero if for all $S\subseteq{\cal S}$, the compression rate vector ${\bf\hat{R}}:=(\hat{R}_{2},\ldots,\hat{R}_{d-1})$ satisfies:

\begin{align}
\nonumber
	\hat{R}_{S}&<\sum_{i\in S}H(X_{i}\hat{Y}_{i})\\
	\label{compdecomp3}
	&\hspace{4mm}-\sum^{|{\bf L}_{d}|}_{l=0}H(X_{A_{l}(S)}\hat{Y}_{A_{l-1}(S)}|X_{\tilde{A}_{l}(S)}\hat{Y}_{\tilde{A}_{l-1}(S)}Y_{d}).
\end{align}

\section{Main Result}
\label{mainresult}
Let ${\cal\hat{R}}({\bf L}_{d})$ denote the set of compression rate vectors ${\bf\hat{R}}$ that satisfy (\ref{compdecomp3}) for all $S\subseteq{\cal S}(d)$.
Let ${\cal\hat{R}}_{d}$ denote the set of compression rate vectors that satisfy the following constraint for all $S\subseteq{\cal S}(d)$:
\begin{align}
\label{boundary}
	\hat{R}_{S}&<\sum_{i\in S}H(X_{i}\hat{Y}_{i})-H(X_{S}\hat{Y}_{S}|X_{{\cal S}(d)\setminus S}\hat{Y}_{{\cal S}(d)\setminus S}Y_{d})
\end{align}  
The following theorem is the focus of this paper.
\begin{theorem}
\label{theoremone}
If ${\bf\hat{R}}\in{\cal\hat{R}}_{d}$ then ${\bf\hat{R}}\in{\cal\hat{R}}({\bf L}_{d})$ for some ${\bf L}_{d}$.
\end{theorem}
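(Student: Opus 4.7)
The plan is to prove Theorem \ref{theoremone} by an iterative refinement procedure: start from any seed layering and repeatedly apply the \textsc{shift} operation to produce a sequence of layerings whose achievable compression rate regions eventually envelop the target $\hat{\bf R}$. Fix an arbitrary $\hat{\bf R} \in \hat{\cal R}_d$, i.e., satisfying (\ref{boundary}), and begin with a trivial layering ${\bf L}_{d,0}$, for instance the single-layer partition with $L_{0}={\cal S}(d)$. If ${\bf L}_{d,0}$ does not achieve $\hat{\bf R}$, there must exist some subset $S\subseteq{\cal S}(d)$ for which (\ref{compdecomp3}) fails; intuitively, the compressions of relays in $S$ are being decoded too shallowly, and should be ``pushed deeper'' so that their decoding can exploit the compressions already recovered in the shallower layers.

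The first step is to verify that $\textsc{shift}({\bf L}_d,S)$, which splits each $L_l$ along $S$ and relocates the $S$-portion into the next deeper slot, preserves the layering axioms (L1)--(L5). The core technical step is then Lemma \ref{induction}: the region $\hat{\cal R}({\bf L}'_d)$ produced by one shift is closer to $\hat{\bf R}$ than $\hat{\cal R}({\bf L}_d)$, in the sense that the particular violated constraint indexed by $S$ is repaired, while no previously satisfied constraint becomes violated. The supporting Lemmas \ref{disjoint}, \ref{remain}, and \ref{join} partition the analysis of an arbitrary $T\subseteq{\cal S}(d)$ according to how $T$ relates to the shifted set $S$: disjoint from $S$, contained in $S$, and straddling $S$. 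In each case the telescoping sum on the right-hand side of (\ref{compdecomp3}) is re-expressed by regrouping the conditional entropies over the new $A_l,\tilde A_l$ sets, and the improvement is extracted using the chain rule, submodularity of conditional entropy, and the conditional independence structure imposed by the compressor pmf $\prod_{i\in{\cal S}(d)}p(x_i)p(\hat y_i|x_iy_i)$.

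Having secured Lemma \ref{induction}, the sequence $\{{\bf L}_{d,n}:n\in\mathbb{N}\}$ is built by setting ${\bf L}_{d,n+1}=\textsc{shift}({\bf L}_{d,n},S_n)$, where $S_n$ is chosen by a systematic rule (for instance, a minimum-cardinality violator under ${\bf L}_{d,n}$) that prevents cycling. Lemma \ref{grandefinale} is the termination statement: a suitable monovariant, such as a lexicographic depth vector over the relays or the cardinality of the set of violated $S$, strictly improves under each shift, and since the space of layerings of the finite set ${\cal S}(d)$ is finite, the sequence must stabilize at some ${\bf L}_{d,n^{*}}$. At that fixed point no $S$ violates (\ref{compdecomp3}), which by definition means $\hat{\bf R}\in\hat{\cal R}({\bf L}_{d,n^{*}})$, completing the proof.

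The main obstacle, I expect, will lie in Lemma \ref{induction}, specifically the straddling case handled by Lemma \ref{join}: controlling the constraints for subsets $T$ that neither contain nor avoid the shifted $S$ requires delicate bookkeeping with the $A_l(T)$ and $\tilde A_l(T)$ sets before and after \textsc{shift}, and mirrors the most technical part of the flow-decomposition argument in \cite{flowdecomparxiv}. A secondary challenge is Lemma \ref{grandefinale}: one must argue that the choice of $S_n$ from a violating set is well-posed (the violators for a given target $\hat{\bf R}$ in $\hat{\cal R}_d$ form a non-empty but shrinking family under shifting) and that, when the iteration terminates, the union of the per-$S$ constraints (\ref{compdecomp3}) collapses precisely to the global constraint (\ref{boundary}) characterising $\hat{\cal R}_d$, so that the terminal layering achieves exactly what the outer bound allows.
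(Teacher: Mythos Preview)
Your overall architecture (seed layering, iterate \textsc{shift}, prove monotone improvement, prove termination) matches the paper, but two load-bearing choices are off and would break the argument as you have sketched it.

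First, the shift set is not arbitrary and is certainly not a minimum-cardinality violator: the paper always shifts by the \emph{largest} violating subset $U_n$. This is what makes Lemma~\ref{induction} go through. In Case~1 of that lemma ($S\cap U=\{\}$), the contradiction is obtained by showing that if $S$ violated (\ref{compdecomp2}) then $U\cup S$ would violate (\ref{compdecomp3}), contradicting the maximality of $U$. With a minimum-cardinality choice there is nothing to contradict. Relatedly, Lemma~\ref{induction} does not assert that ``the violated constraint indexed by $S$ is repaired and nothing else breaks''; what it actually proves is that the \emph{good set} jumps to $Z'=({\cal S}(d)\setminus U)\cup Z$, i.e.\ every subset disjoint from $U$ (together with everything already in $Z$) now satisfies (\ref{compdecomp2}). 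The shifted set $U$ itself need not be repaired after one step.

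Second, your termination plan relies on the space of layerings being finite, but it is not: each \textsc{shift} can increase the depth by one, and intermediate layers may be empty, so $\{{\bf L}_{d,n}\}$ can have unbounded depth and no monovariant over a finite set is available. The paper's Lemma~\ref{grandefinale} instead argues by contradiction via $U^{*}:=\liminf_n U_n$. If the procedure never terminates then $U^{*}\neq\{\}$ (Claim~\ref{grandefinale}.1); the nodes in $U^{*}$ are shifted at every step past some point, so they drift to arbitrarily deep layers while ${\cal S}(d)\setminus U^{*}$ stays bounded (Claim~\ref{grandefinale}.6). Using this separation one shows $\hat R_{U^{*}}>\sum_{i\in U^{*}}H(X_i\hat Y_i)-H(X_{U^{*}}\hat Y_{U^{*}}\mid X_{{\cal S}(d)\setminus U^{*}}\hat Y_{{\cal S}(d)\setminus U^{*}}Y_d)$, which directly contradicts (\ref{boundary}) and hence $\hat{\bf R}\in\hat{\cal R}_d$. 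In other words, the outer bound is not something the terminal layering ``collapses to''; it is the ingredient that forces termination.
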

\begin{proof}
	See Section \ref{prooftheoremone}.
\end{proof}

\begin{figure*}[!t]
        \center{\includegraphics[width=\textwidth]
        {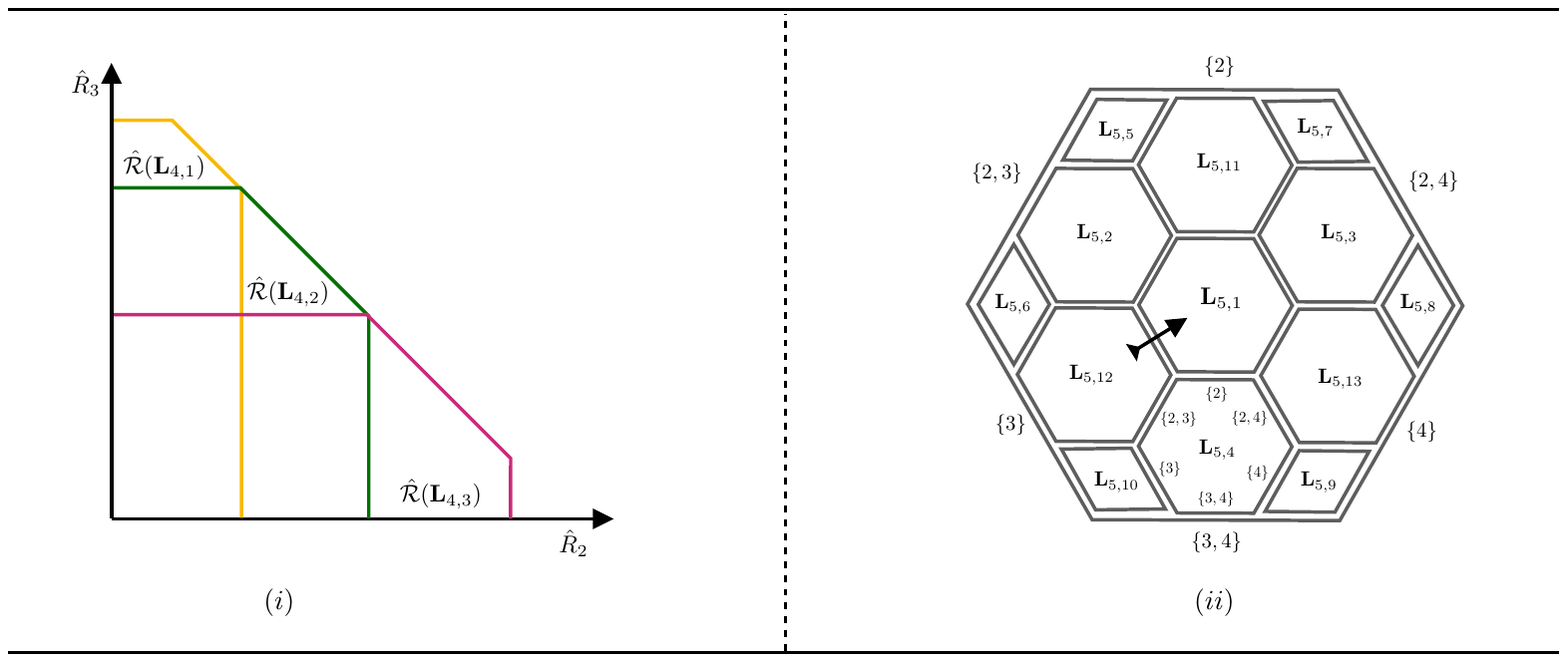}}
        \caption{(i) $\hat{\cal R}_{4}$ for the single-user two-relay channel with relay nodes $\{2,3\}$.  ${\bf L}_{4,1}=(\{2\},\{3\})$, ${\bf L}_{4,2}=(\{2,3\})$, ${\bf L}_{4,3}=(\{3\},\{2\})$ (ii) A 2-D projection of the 3-D region ${\cal\hat{R}}_{5}$ for the single-user three-relay channel with relay nodes $\{2,3,4\}$. ${\bf L}_{5,1}=(\{2,3,4\})$, ${\bf L}_{5,2}=(\{4\},\{2,3\})$, ${\bf L}_{5,3}=(\{3\},\{2,4\})$, ${\bf L}_{5,4}=(\{2\},\{3,4\})$, ${\bf L}_{5,5}=(\{4\},\{3\},\{2\})$, ${\bf L}_{5,6}=(\{4\},\{2\},\{3\})$, ${\bf L}_{5,7}=(\{3\},\{4\},\{2\})$, ${\bf L}_{5,8}=(\{3\},\{2\},\{4\})$, ${\bf L}_{5,9}=(\{2\},\{3\},\{4\})$, ${\bf L}_{5,10}=(\{2\},\{4\},\{3\})$, ${\bf L}_{5,11}=(\{3,4\},\{2\})$, ${\bf L}_{5,12}=(\{2,4\},\{3\})$, ${\bf L}_{5,13}=(\{2,3\},\{4\})$.  Internal facets correspond to (\ref{compdecomp3}) and boundary facets correspond to (\ref{boundary}) for $S=\{2\}, \{2,3\}, \{2,4\}, \{3\}, \{4\},\{2,4\}$.  A shifted layering generates an adjacent subregion: ${\bf L}_{5,1}=\text{\sc shift}({\bf L}_{5,12},\{2,4\})$ where $\{2,4\}$ is the facet of ${\cal\hat{R}}({\bf L}_{5,12})$ that interlocks with ${\cal\hat{R}}({\bf L}_{5,1})$}.
        \label{polytope}
 \end{figure*}
Define $p:=p(x_{1})[\prod_{i\in{\cal S}(d)}p(x_{i})p(\hat{y}_{i}|x_{i}y_{i})]p({\bf y}_{\cal N}|{\bf x}_{\cal N})$.  We have following corollary.    
\begin{corollary}
\label{maincorollary}
Any rate $R$ satisfying: 
	\begin{align}
R<\max_{p}I(X_{1};\hat{Y}_{{\cal S}(d)}Y_{d}|X_{{\cal S}(d)}),		
	\end{align}
 is achievable if for all $S\subseteq{\cal S}(d)$:
	\begin{align}
		\nonumber
		&I(\hat{Y}_{S};Y_{S}|X_{{\cal S}(d)}\hat{Y}_{{\cal S}(d)\setminus S}Y_{d})\\
		\label{achievability}
		&\hspace{25mm}< I(X_{S};\hat{Y}_{{\cal S}(d)\setminus S}|X_{{\cal S}(d)\setminus S}Y_{d}).
	\end{align}
\end{corollary}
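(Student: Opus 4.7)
The plan is to derive Corollary \ref{maincorollary} from Theorem \ref{theoremone} by adding two routine ingredients: the Wyner--Ziv covering step at the relays, and a joint-typicality decoding step for the source message at the destination. With those in place, Theorem \ref{theoremone} does the heavy lifting of routing the compression indices to the destination through an appropriate layering.

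First I would fix $\hat{R}_i = I(\hat{Y}_i;Y_i|X_i)+\delta$ for each $i\in{\cal S}(d)$; the standard covering lemma then guarantees that at the end of each block every relay finds a compression codeword ${\bf\hat{y}}_i(w_i|m_i)$ jointly typical with its observed ${\bf y}_i$ with probability tending to one. Next I would verify that the resulting $\hat{\bf R}=(\hat{R}_2,\ldots,\hat{R}_{d-1})$ lies in $\hat{\cal R}_d$ whenever (\ref{achievability}) holds. The algebra exploits two consequences of the factorization $p=p(x_1)\prod_{i\in{\cal S}(d)}p(x_i)p(\hat{y}_i|x_iy_i)p({\bf y}_{\cal N}|{\bf x}_{\cal N})$: mutual independence of the $X_i$'s, and the fact that each $\hat{Y}_i$ depends only on $(X_i,Y_i)$, which yields $\sum_{i\in S}H(\hat{Y}_i|X_iY_i)=H(\hat{Y}_S|X_{{\cal S}(d)}Y_{{\cal S}(d)})$. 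Substituting these into the RHS of (\ref{boundary}) and canceling entropy terms turns the per-subset rate constraint into a mutual-information inequality matching the form in (\ref{achievability}), so (\ref{achievability}) with small enough $\delta$ gives $\hat{\bf R}\in\hat{\cal R}_d$. Theorem \ref{theoremone} then produces a layering ${\bf L}_d$ under which the destination recovers every $\hat{w}_i$ with vanishing error. Once the compressions are in hand, the destination performs standard joint-typicality decoding of the source codeword, searching for the unique message $m$ such that $({\bf x}_1(m),{\bf x}_{{\cal S}(d)},{\bf\hat{y}}_{{\cal S}(d)},{\bf y}_d)\in T^{(n)}_{\epsilon}$; the packing lemma gives reliable decoding whenever $R<I(X_1;\hat{Y}_{{\cal S}(d)}Y_d|X_{{\cal S}(d)})$.

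The main obstacle is the bookkeeping in the entropy-to-mutual-information conversion. Each term in $H(X_S\hat{Y}_S|X_{{\cal S}(d)\setminus S}\hat{Y}_{{\cal S}(d)\setminus S}Y_d)$ has to be split by chain rule and matched against $\sum_{i\in S}I(\hat{Y}_i;Y_i|X_i)$ using the conditional independence $\hat{Y}_i-(X_i,Y_i)-(\hat{Y}_{-i},X_{-i},Y_{-i},Y_d)$ from the assumed factorization; any missing or extra conditioning variable breaks the chain. Everything downstream of this calculation is standard, so verifying the identity carefully is where I would focus my effort.
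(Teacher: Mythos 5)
Your proposal is correct and follows essentially the same route as the paper: the paper's proof is precisely the chain-rule/conditional-independence calculation ((\ref{cor1})--(\ref{cor9})) showing that the covering constraint $\hat{R}_{S}>\sum_{i\in S}I(\hat{Y}_{i};Y_{i}|X_{i})$ is compatible with (\ref{boundary}) exactly when (\ref{achievability}) holds, with the invocation of Theorem \ref{theoremone} and the final joint-typicality decoding of the source message left implicit. You have simply made those implicit steps explicit and run the (reversible) algebra in the opposite direction, which is equivalent.
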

\begin{proof}
	For the relays to find compression codewords jointly typical with their observations, the following constraint must be satisfied for all $S\subseteq{\cal S}(d)$:
	\begin{align}
		\label{typpiecomp}
		\sum_{i\in S}I(\hat{Y}_{i};Y_{i}|X_{i})>\hat{R}_{S}.
	\end{align}
	Combining (\ref{typpiecomp}) with (\ref{boundary}) yields:
	\begin{align}
		\nonumber
		\Rightarrow&\sum_{i\in S}I(\hat{Y}_{i};Y_{i}|X_{i})\\
		\label{cor1}
		&\hspace{1mm}<\sum_{i\in S}H(X_{i}\hat{Y}_{i})-H(X_{S}\hat{Y}_{S}|X_{{\cal S}(d)\setminus S}\hat{Y}_{{\cal S}(d)\setminus S}Y_{d}),\\
		\nonumber
		\Rightarrow&\sum_{i\in S}(H(\hat{Y}_{i}|X_{i})-H(\hat{Y}_{i}|X_{i}Y_{i}))\\
		\label{cor2}
		&\hspace{1mm}<\sum_{i\in S}H(X_{i}\hat{Y}_{i})-H(X_{S}\hat{Y}_{S}|X_{{\cal S}(d)\setminus S}\hat{Y}_{{\cal S}(d)\setminus S}Y_{d}),\\
		\nonumber
		\Rightarrow& H(X_{S}\hat{Y}_{S}|X_{{\cal S}(d)\setminus S}\hat{Y}_{{\cal S}(d)\setminus S}Y_{d})-\sum_{i\in S}H(\hat{Y}_{i}|X_{i}Y_{i})\\
		\label{cor3}
		&\hspace{1mm}<\sum_{i\in S}(H(X_{i}\hat{Y}_{i})-H(\hat{Y}_{i}|X_{i})),\\
		\nonumber
		\Rightarrow&H(X_{S}\hat{Y}_{S}|X_{{\cal S}(d)\setminus S}\hat{Y}_{{\cal S}(d)\setminus S}Y_{d})-\sum_{i\in S}H(\hat{Y}_{i}|X_{i}Y_{i})\\
		\label{cor4}
		&\hspace{1mm}<\sum_{i\in S}H(X_{i}),\\
		\nonumber
		\Rightarrow&H(X_{S}\hat{Y}_{S}|X_{{\cal S}(d)\setminus S}\hat{Y}_{{\cal S}(d)\setminus S}Y_{d})-H(\hat{Y}_{S}|X_{S}Y_{S})\\
		\label{cor5}
		&\hspace{1mm}<H(X_{S}),\\
		\nonumber
		\Rightarrow&H(\hat{Y}_{S}|X_{{\cal S}(d)}\hat{Y}_{{\cal S}(d)\setminus S}Y_{d})+H(X_{S}|X_{{\cal S}(d)\setminus S}\hat{Y}_{{\cal S}(d)\setminus S}Y_{d})\\
		\label{cor6}
		&\hspace{1mm}-H(\hat{Y}_{S}|X_{S}Y_{S})<H(X_{S}),\\
		\nonumber
		\Rightarrow&H(\hat{Y}_{S}|X_{{\cal S}(d)}\hat{Y}_{{\cal S}(d)\setminus S}Y_{d})-H(\hat{Y}_{S}|X_{S}Y_{S})\\
		\label{cor7}
		&\hspace{1mm}<H(X_{S})-H(X_{S}|X_{{\cal S}(d)\setminus S}\hat{Y}_{{\cal S}(d)\setminus S}Y_{d}),\\
		\nonumber
		\Rightarrow&H(\hat{Y}_{S}|X_{{\cal S}(d)}\hat{Y}_{{\cal S}(d)\setminus S}Y_{d})-H(\hat{Y}_{S}|X_{{\cal S}(d)}Y_{S}\hat{Y}_{{\cal S}(d)\setminus S}Y_{d})\\
		\label{cor8}
		&\hspace{1mm}<H(X_{S}|X_{{\cal S}(d)\setminus S})-H(X_{S}|X_{{\cal S}(d)\setminus S}\hat{Y}_{{\cal S}(d)\setminus S}Y_{d}),\\
		\nonumber
		\Rightarrow& I(\hat{Y}_{S};Y_{S}|X_{{\cal S}(d)}\hat{Y}_{{\cal S}(d)\setminus S}Y_{d})\\
		\label{cor9}
		&\hspace{28mm}<I(X_{S};\hat{Y}_{{\cal S}(d)\setminus S}Y_{d}|X_{{\cal S}(d)\setminus S}),
	\end{align}
	where (\ref{cor1}) follows from combining (\ref{typpiecomp}) and (\ref{boundary}), (\ref{cor2}) follows from the definition of conditional mutual information, (\ref{cor3}) follows from rearranging, (\ref{cor4}) follows from the chain rule, (\ref{cor5}) follows from the chain rule and because $\hat{Y}_{i}$ is independent conditioned on $(X_{i},Y_{i})$ and $\{X_{i}:i\in{\cal S}(d)\}$ are mutually independent, (\ref{cor6}) follows from the chain rule, (\ref{cor7}) follows by rearranging, (\ref{cor8}) follows because $\hat{Y}_{i}$ is independent conditioned on $(X_{i},Y_{i})$ and $\{X_{i}:i\in{\cal S}(d)\}$ are mutually independent, and (\ref{cor9}) follows from the definition of conditional mutual information.
\end{proof}
\section{Proof of Theorem \ref{theoremone}}
\label{prooftheoremone}

To prove Theorem \ref{theoremone}, we first fix ${\bf\hat{R}}\in{\cal\hat{R}}_{d}$ and pick an arbitrary layering ${\bf L}_{d}$.  Then we find the largest subset $S\subseteq{\cal S}(d)$ for which ${\bf\hat{R}}$ violates (\ref{compdecomp3}) and use this subset to ``shift" ${\bf L}_{d}$.  We repeat this process until ${\bf\hat{R}}$ satisfies (\ref{compdecomp3}) for all $S\subseteq{\cal S}(d)$.  Figure \ref{polytope} depicts the geometric relationship between different layerings (i.e., regular decoding schemes) and their achievable regions for the two-relay and three-relay channels..

Fix ${\bf\hat{R}}\in{\cal\hat{R}}$ and let $U$ denote the largest subset of ${\cal S}$ that violates (\ref{compdecomp3}) and define ${\bf L}^{\prime}:=\text{\sc shift}({\bf L},U)$, where for every $i\in{\cal S}$:  
\begin{align}
\label{shift2}
\text{\sc layer}^{\prime}(i)=\begin{cases}l&i\in A_{l}({\cal S})\setminus A_{l}(U)\\ l+1 & i\in A_{l}(U)\end{cases}	
\end{align}


For ${\bf L}^{\prime}_{d}=\text{\sc shift}({\bf L}_{d},U)$, let ${\hat{\cal R}}({\bf L}^{\prime}_{d})$ denote the set of rate vectors that satisfy:

\begin{align}
\nonumber
\hat{R}_{S}&<\displaystyle\sum_{i\in S}H(X_{i}\hat{Y}_{i})\\
\label{compdecomp2}
&\hspace{4mm}-\displaystyle\sum^{|{\bf L}^{\prime}_{d}|}_{l=0}H(X_{A^{\prime}_{l}(S)}\hat{Y}_{A^{\prime}_{l-1}(S)}|X_{\tilde{A}^{\prime}_{l}(S)}\hat{Y}_{\tilde{A}^{\prime}_{l-1}(S)}Y_{d})
\end{align}
for all $S\subseteq{\cal S}(d)$, where (\ref{A}) and (\ref{Atilde}) define $A^{\prime}_{l}(\cdot)$ and $\tilde{A}^{\prime}_{l}(\cdot)$ respectively with respect to ${\bf L}^{\prime}_{d}$.  
 Lemmas \ref{disjoint}-\ref{join} correspond to \cite{flowdecomparxiv}:(Lemmas 3-5).  The proofs here are considerably simpler due to the absence of flows in the compress-forward-setting.  
 
 
 \begin{lemma}
	\label{disjoint}
	For every $S\subseteq{\cal S}(d)$ and $l\in\{0,\ldots,|{\bf L}_{d}|-1\}$, if $S\cap U=\{\}$ then $A_{l}(S)\setminus A_{l}(U)=A^{\prime}_{l}(S)$.
\end{lemma}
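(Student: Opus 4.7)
The plan is to unfold both sides using the definitions in (\ref{A}) and (\ref{shift2}), then use the hypothesis $S\cap U=\{\}$ twice: once to simplify the set difference on the left, and once to argue that the shift operation leaves the layer index of every element of $S$ unchanged.

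First I would rewrite the left side as $A_l(S)\setminus A_l(U)=(S\cap L_l)\setminus(U\cap L_l)$. Since $S\cap U=\{\}$ implies $S\cap(U\cap L_l)=\{\}$, the subtraction is vacuous and the left side collapses to $S\cap L_l=A_l(S)$. So the claim reduces to proving $A^{\prime}_l(S)=A_l(S)$, i.e., that $S$ intersects the new layer $L^{\prime}_l$ in exactly the same nodes it intersects the old layer $L_l$.

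Next I would verify this by picking an arbitrary $i\in S$ and tracking its layer across the shift. Because $i\in S$ and $S\cap U=\{\}$, we have $i\notin U$, hence $i\notin A_q(U)$ for any $q$. Writing $l:=\text{\sc layer}(i)$, it follows that $i\in A_l({\cal S})\setminus A_l(U)$, so by the first case of (\ref{shift2}), $\text{\sc layer}^{\prime}(i)=l=\text{\sc layer}(i)$. Thus for every $i\in S$, membership in $L^{\prime}_l$ coincides with membership in $L_l$, yielding $S\cap L^{\prime}_l=S\cap L_l$, i.e., $A^{\prime}_l(S)=A_l(S)$. Chaining this with the earlier simplification gives the lemma.

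The main obstacle here is really only bookkeeping: one must be scrupulous about which layering indexes $A_l(\cdot)$ versus $A^{\prime}_l(\cdot)$, and recognize that the second clause of (\ref{shift2}) (the clause that actually shifts nodes) is never triggered for any $i\in S$ under the hypothesis. Once that observation is made explicit, everything else is a one-line set identity, which is why this lemma is the ``simple'' counterpart to the corresponding disjointness lemma in the decode-forward setting of \cite{flowdecomparxiv}, where flows require additional tracking.
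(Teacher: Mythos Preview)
Your proof is correct and uses essentially the same mechanism as the paper's: the hypothesis $S\cap U=\{\}$ forces every $i\in S$ out of $A_l(U)$, so the first clause of (\ref{shift2}) applies and $\text{\sc layer}'(i)=\text{\sc layer}(i)$. The only difference is organizational: you first collapse $A_l(S)\setminus A_l(U)$ to $A_l(S)$ and then prove $A'_l(S)=A_l(S)$, whereas the paper argues the original set equality by double inclusion; the underlying idea is identical.
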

\begin{proof}
First suppose $i\in A_{l}(S)\setminus A_{l}(U)$.  The definition in (\ref{A}) implies $\text{\sc layer}(i)=l$ and since $i\notin A_{l}(U)$ it follows from (\ref{shift2}) that $\text{\sc layer}^{\prime}(i)=\text{\sc layer}(i)$. Therefore $\text{\sc layer}^{\prime}(i)=l$.  The definition in (\ref{A}) also implies $i\in S\setminus U\subseteq S$.  By the same definition and $\text{\sc layer}^{\prime}(i)=l$, it follows that $i\in A^{\prime}_{l}(S)$.  

Now suppose $i\in A^{\prime}_{l}(S)$.  The definition in (\ref{A}) implies $i\in S$.  Since $S\cap U=\{\}$, it follows that $i\notin A_{l}(U)$.  Therefore $\text{\sc layer}^{\prime}(i)=\text{\sc layer}(i)=l$ and $i\in A_{l}(S)\setminus A_{l}(U)$. 
\end{proof}

\begin{lemma}
\label{remain}
For every $S\subseteq{\cal S}(d)$ and $l\in\{0,\ldots,|{\bf L}_{d}|-1\}$, $A^{\prime}_{l}(S\setminus U)=A_{l}(S\setminus U)$
\end{lemma}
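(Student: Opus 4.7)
The plan is to derive Lemma \ref{remain} as a direct corollary of Lemma \ref{disjoint}. The key observation is that the set $S\setminus U$ satisfies the hypothesis of Lemma \ref{disjoint}, since by construction $(S\setminus U)\cap U=\{\}$. Applying Lemma \ref{disjoint} with $S\setminus U$ playing the role of $S$ immediately yields
\begin{align}
A^{\prime}_{l}(S\setminus U)=A_{l}(S\setminus U)\setminus A_{l}(U).
\end{align}

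The second step is to eliminate the $A_{l}(U)$ on the right-hand side. From the definition in (\ref{A}), $A_{l}(S\setminus U)=(S\setminus U)\cap L_{l}\subseteq S\setminus U$ and $A_{l}(U)=U\cap L_{l}\subseteq U$. Since $(S\setminus U)\cap U=\{\}$, these two subsets of $L_{l}$ are disjoint, so $A_{l}(S\setminus U)\setminus A_{l}(U)=A_{l}(S\setminus U)$. Combining this with the equality above gives $A^{\prime}_{l}(S\setminus U)=A_{l}(S\setminus U)$, as desired.

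There is no real obstacle here; the lemma is a routine set-theoretic consequence of Lemma \ref{disjoint} together with the elementary fact that removing a disjoint set does nothing. The only minor thing to be careful about is to apply Lemma \ref{disjoint} to the correct substitution (namely $S\setminus U$ rather than $S$), and to explicitly note the disjointness $(S\setminus U)\cap U=\{\}$ so that both (i) the hypothesis of Lemma \ref{disjoint} is satisfied and (ii) the set-difference reduces trivially.
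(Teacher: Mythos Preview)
Your proof is correct. The paper takes a slightly different route: rather than invoking Lemma~\ref{disjoint}, it gives a direct element-chasing argument, showing both inclusions by picking $i$ in one side, noting $i\in S\setminus U$ forces $i\notin A_{l}(U)$, and then using (\ref{shift2}) to conclude $\text{\sc layer}^{\prime}(i)=\text{\sc layer}(i)$. Your approach is more economical in that it recognizes Lemma~\ref{remain} as the special case of Lemma~\ref{disjoint} obtained by substituting $S\setminus U$ for $S$ and then discarding the vacuous set difference; the paper's approach is self-contained but essentially replays the proof of Lemma~\ref{disjoint} in this special case. Both are routine and equally valid.
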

\begin{proof}
Fix $i\in A^{\prime}_{l}(S\setminus U)$.  The definition in (\ref{A}) implies $i\in S\setminus U$.  Since $i\notin U$, (\ref{A}) implies $i\notin A_{l}(U)$.  It follows from (\ref{shift2}), that $\text{\sc layer}^{\prime}(i)=\text{\sc layer}(i)=l$.  Therefore (\ref{A}) implies $i\in A_{l}(S\setminus U)$.  Now fix $i\in A_{l}(S\setminus U)$.  Again (\ref{A}) implies $i\in S\setminus U$ so $i\notin A_{l}(U)$.  Then (\ref{shift2}) implies $\text{\sc layer}^{\prime}(i)=\text{\sc layer}(i)=l$, so that $i\in A^{\prime}_{l}(S\setminus U)$.  
\end{proof}

\begin{lemma}
\label{join}
For every $S\subseteq{\cal S}(d)$ and $l\in\{0,\ldots,|{\bf L}_{d}|-1\}$, $A_{l-1}(S\cap U)=A^{\prime}_{l}(S\cap U)$
\end{lemma}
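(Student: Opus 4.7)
The plan is to prove Lemma \ref{join} by double inclusion, closely paralleling the style of the proofs of Lemmas \ref{disjoint} and \ref{remain}. The underlying intuition is direct: the shift operation in (\ref{shift2}) increments the layer of every node in $U$ by exactly one and leaves every other node's layer unchanged, so a node of $S\cap U$ belongs to the original layer $l-1$ if and only if it belongs to the new layer $l$.

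For the forward inclusion, I would fix $i\in A_{l-1}(S\cap U)$. Definition (\ref{A}) then forces $i\in S\cap U$ and $\text{\sc layer}(i)=l-1$. Since $i\in U$ and $\text{\sc layer}(i)=l-1$, (\ref{A}) gives $i\in A_{l-1}(U)$. Applying the second branch of the shift rule (\ref{shift2}) then yields $\text{\sc layer}^{\prime}(i)=(l-1)+1=l$. Combined with $i\in S\cap U$, a second application of (\ref{A}) (now to the primed layering) gives $i\in A^{\prime}_{l}(S\cap U)$.

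For the reverse inclusion, I would fix $i\in A^{\prime}_{l}(S\cap U)$, so $i\in S\cap U$ and $\text{\sc layer}^{\prime}(i)=l$. Since $i\in U$, the first branch of (\ref{shift2}), which applies only to nodes lying in $A_{\text{\sc layer}(i)}({\cal S})\setminus A_{\text{\sc layer}(i)}(U)$, cannot be what assigned $i$ its new layer; only the second branch applies. Hence $\text{\sc layer}^{\prime}(i)=\text{\sc layer}(i)+1$, giving $\text{\sc layer}(i)=l-1$, and therefore $i\in A_{l-1}(S\cap U)$ by (\ref{A}).

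The only subtle point, and what I anticipate as the main (minor) obstacle, is the mutual exclusivity of the two branches of (\ref{shift2}) for nodes of $U$: one must observe that any $i\in U$ automatically satisfies $i\in A_{\text{\sc layer}(i)}(U)$ and so cannot fall under the first branch. Once this is flagged, both inclusions reduce to straightforward unpacking of definitions, so the proof will be no longer than those of the preceding two lemmas.
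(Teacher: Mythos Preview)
The proposal is correct and mirrors the paper's own proof almost verbatim: both argue by double inclusion, using (\ref{A}) to extract membership and layer index, and (\ref{shift2}) to translate between $\text{\sc layer}(i)$ and $\text{\sc layer}^{\prime}(i)$ for $i\in U$. Your added remark about the mutual exclusivity of the two branches of (\ref{shift2}) is the only elaboration beyond the paper's terser version, and it is a harmless clarification rather than a different idea.
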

\begin{proof}
Fix $i\in A_{l-1}(S\cap U)$.  The definition in (\ref{A}) implies $\text{\sc layer}(i)=l-1$ and $i\in S\cap U$.  Since $i\in U$, (\ref{A}) implies $i\in A_{l-1}(U)$.  It follows from (\ref{shift2}), that $\text{\sc layer}^{\prime}(i)=\text{\sc layer}(i)+1=l$.  Therefore (\ref{A}) implies $i\in A^{\prime}_{l}(S\cap U)$.  Now fix $i\in A^{\prime}_{l}(S\cap U)$.  Again (\ref{A}) implies $i\in S\cap U$.  Then (\ref{shift2}) implies $\text{\sc layer}(i)=\text{\sc layer}^{\prime}(i)-1=l-1$, so that $i\in A_{l-1}(S\cap U)$.
\end{proof}

Lemmas \ref{induction} and \ref{grandefinale} correspond to \cite{flowdecomparxiv}:(Lemmas 6 and 7).  
The following definition will be useful.  Let $\{B_{l}\subseteq C_{l}\subseteq{\cal S}\}$ be $l$-indexed sequences of sets in which $\tilde{B}_{l}:=C_{l}\setminus B_{l}$.  Define:
\begin{align}
\label{hdef}
	h[B_{l}|\tilde{B}_{l}]:=H(X_{B_{l}}\hat{Y}_{B_{l-1}}|X_{\tilde{B}_{l}}\hat{Y}_{\tilde{B}_{l-1}}Y_{d}).
\end{align}

Let $Z\subseteq{\cal S}(d)$ denote the set of source nodes that satisfy (\ref{compdecomp3}) for all $S\subseteq Z$, and let $Z^{\prime}\subseteq{\cal S}(d)$ denote the set of source nodes that satisfy (\ref{compdecomp2}) for all $S\subseteq Z^{\prime}$.
  
\begin{lemma}
	\label{induction}
	$Z^{\prime}=({\cal S}(d)\setminus U)\cup Z$.
\end{lemma}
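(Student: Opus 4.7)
The plan is to prove the two inclusions $({\cal S}(d) \setminus U) \cup Z \subseteq Z'$ and $Z' \subseteq ({\cal S}(d) \setminus U) \cup Z$ separately, both resting on a common chain-rule rewrite of the shifted summed conditional entropy $\sum_l h[A'_l(S) | \tilde A'_l(S)]$. Combining Lemmas \ref{remain} and \ref{join} gives the decomposition $A'_l(S) = A_l(S \setminus U) \cup A_{l-1}(S \cap U)$, with a parallel expression for $\tilde A'_l(S)$. Using these identifications, the chain rule, and the fact that the $\{X_i\}$ are mutually independent while $\hat Y_i$ is conditionally independent across $i$ given $(X_i, Y_i)$, the shifted sum should rewrite as a signed combination of the unshifted sums $\sum_l h[A_l(T) | \tilde A_l(T)]$ over $T \in \{S \cup U,\ U,\ S \cap U\}$.

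For the forward inclusion, I fix $S \subseteq ({\cal S}(d) \setminus U) \cup Z$ so that $S \cap U \subseteq Z$, and combine three rate bounds: (i) the upper bound on $\hat R_{S \cup U}$ from (\ref{compdecomp3}) applied to $S \cup U$, which holds because $S \cup U$ is strictly larger than the maximal violator $U$ and therefore cannot itself violate; (ii) the upper bound on $\hat R_{S \cap U}$ from (\ref{compdecomp3}) applied to $S \cap U \subseteq Z$; and (iii) the lower bound $\hat R_U \geq \sum_{i \in U} H(X_i \hat Y_i) - \sum_l h[A_l(U) | \tilde A_l(U)]$ expressing the violation of (\ref{compdecomp3}) at $U$. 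Combining these via $\hat R_S = \hat R_{S \cup U} - \hat R_U + \hat R_{S \cap U}$ and matching against the rewrite yields exactly (\ref{compdecomp2}) for $S$.

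For the reverse inclusion, I would argue by contrapositive: if some $i \in Z' \cap U$ failed to lie in $Z$, then by the maximality of $Z$ there would be some $T \ni i$ with $T \subseteq Z \cup \{i\}$ that violates (\ref{compdecomp3}) under ${\bf L}_d$. Running the rewrite identity in reverse on a suitable subset of $Z'$ containing $T$ would force (\ref{compdecomp3}) to hold on $T$, contradicting the violation.

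The main obstacle is the chain-rule rewrite itself. Shifting $U$ down by one layer simultaneously drops $X_U$-variables from the conditioning at some layers and reintroduces $\hat Y_U$-variables as active-side arguments at deeper layers, so the resulting identity has to telescope cleanly to a $\pm$ combination over $S \cup U$, $U$, and $S \cap U$ with no residual mutual-information cross-terms. Getting this accounting to close essentially uses the independence of $\{X_i\}$ and the per-relay Markov structure of $\hat Y_i$, whereas the combinatorial Lemmas \ref{disjoint}--\ref{join} only handle the set-level bookkeeping.
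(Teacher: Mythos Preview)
Your forward-inclusion plan is essentially the paper's two-case contradiction argument repackaged as a direct proof: the paper assumes $S\subseteq(\mathcal{S}(d)\setminus U)\cup Z$ violates (\ref{compdecomp2}), then in Case~1 ($S\cap U=\emptyset$) contradicts the maximality of $U$, and in Case~2 contradicts $S\cap U\subseteq Z$. Your items (i)--(iii) invoke exactly the same three facts, and the entropy manipulations you would need are precisely the paper's Claims~\ref{induction}.1 and \ref{induction}.3--\ref{induction}.7.

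However, the ``signed rewrite'' you are hoping for is \emph{not} an identity. The chain rule plus Lemmas~\ref{disjoint}--\ref{join} give exactly
\[
\sum_l h[A'_l(S)|\tilde A'_l(S)]
=\Bigl(\sum_l h[A_l(S\cup U)|\tilde A_l(S\cup U)]-\sum_l h[A_l(U)|\tilde A_l(U)]\Bigr)
+\sum_l h\bigl[A_{l-1}(S\cap U)\,\big|\,\tilde A'_l(S\setminus U)\setminus A_{l-1}(S\cap U)\bigr],
\]
and the last conditioning set strictly contains $\tilde A_{l-1}(S\cap U)$ by the extra piece $L_l\setminus A_l(S\cup U)$ (this is the content of the paper's Claim~\ref{induction}.7). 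Since $h[\cdot|\cdot]$ is a conditional entropy, you get only the inequality $\sum_l h[A'_l(S)|\tilde A'_l(S)]\le\sum_l h[A_l(S\cup U)]-\sum_l h[A_l(U)]+\sum_l h[A_l(S\cap U)]$, not equality. Fortunately that is the direction your direct argument needs, so the forward inclusion survives once you downgrade ``identity'' to ``inequality''. Note also that the independence of $\{X_i\}$ and the Markov structure of $\hat Y_i$ play no role here; those facts are used only in Corollary~\ref{maincorollary}. A second minor gap: your (i) fails when $S\subseteq U$, since then $S\cup U=U$ is a violator, not a satisfier. In that edge case (i) and (iii) collapse to the same statement and cancel in your inclusion--exclusion, so you must argue from (ii) alone together with the inequality above specialised to $S\setminus U=\emptyset$.

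On the reverse inclusion: the paper does not prove it, and its sequel (Lemma~\ref{grandefinale}) uses only $Z'\supseteq(\mathcal{S}(d)\setminus U)\cup Z$. Your contrapositive sketch does not work as stated: the violating set $T\subseteq Z\cup\{i\}$ you produce need not lie inside $Z'$, so $i\in Z'$ gives you no control over $T$ under ${\bf L}'_d$; and ``running the rewrite in reverse'' would require the \emph{opposite} inequality to the one that actually holds.
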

\begin{proof}
	The proof is by contradiction.  Suppose there is some $S\subseteq({\cal S}(d)\setminus U)\cup V$ that violates (\ref{compdecomp2}) for ${\bf L}^{\prime}_{d}$.  By assumption,
	\begin{align}
		\label{lemmaVprep2}
		\hat{R}_{S}&>\displaystyle\sum_{i\in S}H(X_{i}\hat{Y}_{i})-\displaystyle\sum^{|{\bf L}_{d}^{\prime}|}_{l=0}h[A^{\prime}_{l}(S)|\tilde{A}^{\prime}_{l}(S)]
	\end{align}
	{\bf Case 1:}  $S\cap U=\{\}$.
	We will prove the following inequalities:
	\begin{align}
		\nonumber
		\hat{R}_{U\cup S}&>\displaystyle\sum_{i\in U}H(X_{i}\hat{Y}_{i})-\displaystyle\sum^{|{\bf L}_{d}|}_{l=0}h[A_{l}(U)|\tilde{A}_{l}(U)]\\
		\label{lemmaU1}
		&\hspace{8mm}+\displaystyle\sum_{i\in S}H(X_{i}\hat{Y}_{i})-\displaystyle\sum^{|{\bf L}^{\prime}_{d}|}_{k=0}h[A^{\prime}_{k}(S)|\tilde{A}^{\prime}_{k}(S)]\\
	    \nonumber
		&=\displaystyle\sum_{i\in U\cup S}H(X_{i}\hat{Y}_{i})-\displaystyle\sum^{|{\bf L}_{d}|}_{l=0}h[A_{l}(U)|\tilde{A}_{l}(U)]\\
		\label{lemmaU2}
		&\hspace{8mm}-\displaystyle\sum^{|{\bf L}^{\prime}_{d}|}_{k=0}h[A_{k}(S)\setminus A_{k}(U)|\tilde{A}_{k}(S)\setminus A_{k}(U)]\\
	    \nonumber
		&=\displaystyle\sum_{i\in U\cup S}H(X_{i}\hat{Y}_{i})-\displaystyle\sum^{|{\bf L}_{d}|}_{l=0}(h[A_{l}(U)|\tilde{A}_{l}(U)]\\
		\label{lemmaU3}
		&\hspace{8mm}+h[A_{l}(S)\setminus A_{l}(U)|\tilde{A}_{l}(S)\setminus A_{l}(U)])\\
		\nonumber
		&=\displaystyle\sum_{i\in U\cup S}H(X_{i}\hat{Y}_{i})\\
		\label{lemmaU3a}
		&\hspace{8mm}-\displaystyle\sum^{|{\bf L}_{d}|}_{l=0}h[A_{l}(U)\cup A_{l}(S)|\tilde{A}_{l}(S)\setminus A_{l}(U)]\\
		\nonumber
		&=\displaystyle\sum_{i\in U\cup S}H(X_{i}\hat{Y}_{i})\\
		\label{lemmaU3b}
		&\hspace{8mm}-\displaystyle\sum^{|{\bf L}_{d}|}_{l=0}h[A_{l}(U\cup S)|\tilde{A}_{l}(U\cup S)]
	\end{align}
which contradicts the assumption that $U$ is the largest subset that violates (\ref{compdecomp}) since $S$ and $U$ are disjoint.  To justify (\ref{lemmaU1})-(\ref{lemmaU3b}), we introduce some preliminary claims.  

\hspace{-3.5mm}{\bf Claim \ref{induction}.1}  \textit{If $S_{1}\subseteq{\cal N}$, $S_{3}\subseteq S_{2}\subseteq{\cal N}$ and $S_{1}\cap S_{2}=\{\}$ then:}
\begin{align}
	\label{claim1}
	h[S_{1}|S_{2}]+h[S_{3}|S_{2}\setminus S_{3}]&=h[S_{1}\cup S_{3}|S_{2}\setminus S_{3}].
\end{align}
\begin{proof}
	First, we verify the terms in (\ref{claim1}) are consistent with the definition in (\ref{hdef}).  For the first term in (\ref{claim1}), set $B_{l}:=S_{1}$ and $C_{l}:=S_{1}\cup S_{2}$.  For the second term, set $B_{l}:=S_{3}$ and $C_{l}:=S_{2}$.  For the third term, set $B_{l}:=S_{1}\cup S_{3}$ and $C_{l}:=S_{1}\cup S_{2}$.  In all cases $B_{l}\subseteq C_{l}$ and $\tilde{B}_{l}=C_{l}\setminus B_{l}$ so the terms in (\ref{claim1}) are well-defined.  To prove (\ref{claim1}), apply the chain rule.
\end{proof}
\hspace{-3.5mm}{\bf Claim \ref{induction}.3} \textit{For any $S\subseteq{\cal S}(d)$ and every $k\in 0,\ldots,|{\bf L}^{\prime}_{d}|-1$:}
\begin{align}
\label{induction8}
&\hspace{-2mm}h[A^{\prime}_{k}(S)|\tilde{A}^{\prime}_{k}(S)]=h[A_{k}(S)\setminus A_{k}(U)|\tilde{A}^{\prime}_{k}(S)\setminus A_{k}(U)]
\end{align}	
\begin{proof}
First we prove that:
\begin{align}
	\label{subclaim}
	\{\cup^{k}_{q=0}L_{q}\}\setminus A_{k}(U)&=\{\cup^{k}_{q=0}L^{\prime}_{q}\}.
\end{align}
To prove (\ref{subclaim}), fix $i\in\{\cup^{k}_{q=0}L_{q}\}\setminus A_{k}(U)$.  If $i\in\{\cup^{k-1}_{q=0}L_{q}\}$, then (\ref{shift2}) implies $i\in\{\cup^{k}_{q=0}L^{\prime}_{q}\}$.  If $i\in L_{k}\setminus A_{k}(U)$, then (\ref{shift2}) implies $i\in L^{\prime}_{k}\subseteq\cup^{k}_{q=0}L^{\prime}_{q}$.  Now fix $i\in\{\cup^{k}_{q=0}L^{\prime}_{q}\}$.  For any $0\leq q\leq k$, if $i\in L^{\prime}_{q}$ then (\ref{shift2}) implies $i\in L_{q-1}\cup L_{q}$ and $i\notin A_{q}(U)$.  Therefore, $i\in\{\cup^{k}_{q=0}L_{q}\}\setminus A_{k}(U)$.  Next, if $S\cap U=\{\}$, we prove that:
\begin{align}
	\label{subclaim2}
	A^{\prime}_{k}(S)=A_{k}(S).
\end{align}
To prove (\ref{subclaim2}), suppose $i\in A_{k}(S)$ but $i\notin A^{\prime}_{k}(S)$.  By the definition in (\ref{aln}), $i\in S$.  Moreover, (\ref{shift2}) implies $i\in A_{k}(U)$ which implies $i\in U$.  But $S\cap U=\{\}$ which is a contradiction.  Similarly, suppose $i\in A^{\prime}_{k}(S)$, but $i\notin A_{k}(S)$.  By definition in (\ref{aln}), $i\in S$.  Furthermore, (\ref{shift2}) implies $i\in A_{l-1}(U)$, which also means $i\in U$.  But again, $S\cap U=\{\}$ which is a contradiction.  Now observe that:
\begin{align}
	\label{claim3a}
	\tilde{A}^{\prime}_{k}(S)&:=\{\cup^{k}_{q=0}L^{\prime}_{q}\}\setminus A^{\prime}_{k}(S),\\
	\label{claim3b}
	&=\{\{\cup^{k}_{q=0}L_{q}\}\setminus A_{k}(U)\}\setminus A^{\prime}_{k}(S),\\
	\label{claim3b1}
	&=\{\{\cup^{k}_{q=0}L_{q}\}\setminus A_{k}(U)\}\setminus A_{k}(S),\\
	\label{claim3b2}
	&=\{\cup^{k}_{q=0}L_{q}\}\setminus \{A_{k}(U)\cup A_{k}(S)\},\\
	\label{claim3c}
	&=\{\cup^{k}_{q=0}L_{q}\}\setminus A_{k}(S)\}\setminus A_{k}(U)\},\\
	\label{claim3c1}
	&=\tilde{A}_{k}(S)\setminus A_{k}(U),
\end{align}
where (\ref{claim3a}) follows from the definition in (\ref{Atilde}), (\ref{claim3b}) follows from substituting (\ref{subclaim}) in (\ref{claim3a}), (\ref{claim3b1}) follows from substituting (\ref{subclaim2}) in (\ref{claim3b}), (\ref{claim3b2}) and (\ref{claim3c}) follow from normal set-theoretic operations, and (\ref{claim3c1}) follows from (\ref{claim3c}) and the definition in (\ref{Atilde}).  We have the following:
\begin{align}
	\label{claim3d}
	\hspace{-2mm}h[A^{\prime}_{k}(S)|\tilde{A}^{\prime}_{k}(S)]&=h[A_{k}(S)\setminus A_{k}(U)|\tilde{A}^{\prime}_{k}(S)],\\
	\label{claim3e}
	&=h[A_{k}(S)\setminus A_{k}(U)|\tilde{A}_{k}(S)\setminus A_{k}(U)]
\end{align}
where (\ref{claim3d}) follows from Lemma \ref{disjoint}, and (\ref{claim3e}) follows from (\ref{claim3c1}), proving (\ref{induction8}).
\end{proof}
\hspace{-3.5mm}{\bf Claim \ref{induction}.4} \textit{For any $S\subseteq{\cal S}(d)$ and every $l\in 0,\ldots,|{\bf L}_{d}|-1$:}
\begin{align}
\nonumber
&\hspace{-12mm}h[A_{l}(U\cup S)|\tilde{A}_{l}(U\cup S)]\\
\nonumber
&\hspace{-12mm}=h[A_{l}(U)|\tilde{A}_{l}(U)]\\
\label{induction9}
&\hspace{5mm}+h[A_{l}(S)\setminus A_{l}(U)|\tilde{A}_{l}(S)\setminus A_{l}(U)]
\end{align}
\begin{proof}
To prove (\ref{induction9}), we invoke Claim 6.1.  First, we verify that the premises of Claim 6.1 are satisfied.  Let $S_{1}:=A_{l}(U)$, $S_{2}:=\tilde{A}_{l}(U)$, and $S_{3}:=A_{l}(S)\setminus A_{l}(U)$.  From (\ref{Atilde}):
\begin{align}
	\label{induction8b}
	\tilde{A}_{l}(U)=(\cup^{l}_{q=0}L_{q})\setminus A_{l}(U).
\end{align}
Since $A_{l}(S)\subseteq L_{l}$, (\ref{induction8b}) implies $S_{3}\subseteq S_{2}$.  By inspection, $S_{1}\cap S_{2}=\{\}$.  Finally,
\begin{align}
	\nonumber
	&\hspace{-3mm}S_{2}\setminus S_{3}\\
	\nonumber
	&\hspace{-3mm}=\tilde{A}_{l}(U)\setminus (A_{l}(S)\setminus A_{l}(U)),\\
	\nonumber
	&\hspace{-3mm}=((\cup^{l}_{q=0}L_{q})\setminus A_{l}(U))\setminus(A_{l}(S)\setminus A_{l}(U)),\\
	\nonumber
	&\hspace{-3mm}=(\cup^{l}_{q=0}L_{q})\setminus(A_{l}(S)\cup A_{l}(U)),\\
	\nonumber
	&\hspace{-3mm}=((\cup^{l}_{q=0}L_{q})\setminus A_{l}(S))\setminus A_{l}(U),\\
	\label{induction8d}
	&\hspace{-3mm}=\tilde{A}_{l}(S)\setminus A_{l}(U),
\end{align}
where the equalities follow from normal set-theoretic operations and (\ref{induction8b}).  Furthermore,
	\begin{align}
		\nonumber
		S_{1}\cup S_{3}&=A_{l}(U)\cup(A_{l}(S)\setminus A_{l}(U))\\
		\label{induction8c}
		&=A_{l}(U)\cup A_{l}(S).
	\end{align}
	The premises of Claim 6.1 are satisfied in (\ref{induction9}),  (\ref{induction8d}) and  (\ref{induction8c}). 
\end{proof}	
We are now ready to verify (\ref{lemmaU1})-(\ref{lemmaU3b}).  To justify (\ref{lemmaU1}), we invoke the definition of $U$ as the largest subset of ${\cal S}(d)$ that violates (\ref{compdecomp2}) and the fact that $S$ satisfies (\ref{lemmaVprep2}) by hypothesis.  Moreover (\ref{lemmaU2}) follows from Claim \ref{induction}.3, and (\ref{lemmaU3}) follows from relabeling and the fact that $|{\bf L}^{\prime}_{d}|\leq|{\bf L}_{d}|+1$ and $A_{l}(S)=\{\}$ for $l=|{\bf L}_{d}|$.  To justify (\ref{lemmaU3a}), we invoke Claim \ref{induction}.4.  To justify (\ref{lemmaU3b}), we have the following equalities:
 In addition:
\begin{align}
	\nonumber
	\tilde{A}_{l}(S)\setminus A_{l}(U)&=(\cup^{l}_{q=0}L_{q})\setminus A_{l}(S)\setminus A_{l}(U),\\
	\nonumber
	&=(\cup^{l}_{q=0}L_{q})\setminus(A_{l}(S)\cup A_{l}(U)),\\
	\nonumber
	&=(\cup^{l}_{q=0}L_{q})\setminus A_{l}(S\cup U),\\
	\label{induction8e}
	&=\tilde{A}_{l}(S\cup U),
\end{align} 
where the equalities follow from normal set-theoretic operations.  Applying (\ref{induction8e}) to (\ref{lemmaU3a}) gives (\ref{lemmaU3b}), which proves Lemma \ref{induction} for Case 1.

{\bf Case 2:}  $S\cap U\neq\{\}$.  Since $\{S\setminus U\}\cap U=\{\}$, Case 1  implies:
\begin{align}
	\nonumber
	R_{S\setminus U}&<\displaystyle\sum_{i\in S\setminus U}H(X_{i}\hat{Y}_{i})\\
	\label{lemmaVprep1}
	&\hspace{18mm}-\displaystyle\sum^{|{\bf L}^{\prime}_{d}|}_{l=0}h[A^{\prime}_{l}(S\setminus U)|\tilde{A}^{\prime}_{l}(S\setminus U)]
\end{align}
We will prove the following series of inequalities:
\begin{align}
		\nonumber
		R_{S\cap U}&>\displaystyle\sum_{i\in S\cap U}H(X_{i}\hat{Y}_{i})\\
		\label{lemmaV1}
		&\hspace{-8mm}-\sum^{|{\bf L}^{\prime}_{d}|}_{l=0}h[A^{\prime}_{l}(S\cap U)\setminus A^{\prime}_{l}(S\setminus U)|\tilde{A}^{\prime}_{l}(S\cap U)\setminus A^{\prime}_{l}(S\setminus U)]\\
		\label{lemmaV2}
		&\hspace{-8mm}\geq\displaystyle\sum_{i\in S\cap U}H(X_{i}\hat{Y}_{i})-\displaystyle\sum^{|{\bf L}_{d}|}_{l=0}h[A_{l}(S\cap U)|\tilde{A}_{l}(S\cap U)].
\end{align}
Since $S\subseteq({\cal S}(d)\setminus U)\cup Z$ by hypothesis, it follows that $(S\cap U)\subseteq Z$.  Therefore (\ref{lemmaV2}) contradicts the assumption that all subsets of $Z$ satisfy (\ref{compdecomp2}) for ${\bf L}_{d}$.  To justify (\ref{lemmaV1})-(\ref{lemmaV2}), we introduce some preliminary claims.

\hspace{-3.5mm}{\bf Claim \ref{induction}.5}   \textit{For any $S\subseteq{\cal S}(d)$:}
\begin{align}
	\nonumber
		R_{S\cap U}&>\displaystyle\sum_{i\in S\cap U}H(X_{i}\hat{Y}_{i})\\
		\nonumber
		&\hspace{-8mm}-\sum^{|{\bf L}^{\prime}_{d}|}_{l=0}h[A^{\prime}_{l}(S\cap U)\setminus A^{\prime}_{l}(S\setminus U)|\tilde{A}^{\prime}_{l}(S\cap U)\setminus A^{\prime}_{l}(S\setminus U)].
\end{align}

\begin{proof}
Fix any $l\in\{0,\ldots,|{\bf L}^{\prime}_{d}|-1\}$ and consider the following sequence of equalities:
\begin{align}
\nonumber
&\hspace{-2.4mm}h[A^{\prime}_{l}(S\setminus U)|\tilde{A}^{\prime}_{l}(S\setminus U)]\\
   \nonumber
	&\hspace{-2mm}+h[A^{\prime}_{l}(S\cap U)\setminus A^{\prime}_{l}(S\setminus U)|\tilde{A}^{\prime}_{l}(S\cap U)\setminus A^{\prime}_{l}(S\setminus U)]\\
	\label{induction10}
	&\hspace{-2mm}=h[A^{\prime}_{l}(S\setminus U)\cup A^{\prime}_{l}(S\cap U)|\tilde{A}^{\prime}_{l}(S\cap U)\setminus A^{\prime}_{l}(S\setminus U)],\\
	\label{induction11}
	&\hspace{-2mm}= h[A^{\prime}_{l}(S)|\tilde{A}^{\prime}_{l}(S)].
\end{align}
To justify (\ref{induction10}), we invoke Claim \ref{induction}.1.  First, we verify that  the premises of Claim \ref{induction}.1 are satisfied.  Let $S_{1}:=A^{\prime}_{l}(S\setminus U)$, $S_{2}:=\tilde{A}^{\prime}_{l}(S\setminus U)$, and $S_{3}:=A^{\prime}_{l}(S\cap U)\setminus A^{\prime}_{l}(S\setminus U)$.  By inspection $S_{1}\cap S_{2}=\{\}$.  The definition in (\ref{A}) implies $A^{\prime}_{l}(S\cap U)\subseteq L^{\prime}_{l}$ and (\ref{Atilde}) implies $\tilde{A}^{\prime}_{l}(S\setminus U):=(\cup^{l}_{q=0}L^{\prime}_{q})\setminus A^{\prime}_{l}(S\setminus U)$.  It follows that: 
\begin{align}
	\label{induction16}
	A^{\prime}_{l}(S\cap U)\setminus A^{\prime}_{l}(S\setminus U)\subseteq\tilde{A}^{\prime}_{l}(S\setminus U),
\end{align}
which implies $S_{3}\subseteq S_{2}$.  Moreover,  
\begin{align}
	\nonumber
	&\hspace{-9mm}S_{2}\setminus S_{3}\\
	\nonumber
	&\hspace{3mm}\hspace{-9mm}=\tilde{A}^{\prime}_{l}(S\setminus U)\setminus (A^{\prime}_{l}(S\cap U)\setminus A^{\prime}_{l}(S\setminus U))\\
	\nonumber
	&\hspace{3mm}\hspace{-9mm}=((\cup^{l}_{q=0}L^{\prime}_{q})\setminus A^{\prime}_{l}(S\setminus U))\\
	\nonumber
	&\hspace{3mm}\hspace{-9mm}\hspace{30mm}\setminus(A^{\prime}_{l}(S\cap U)\setminus A^{\prime}_{l}(S\setminus U)),\\
	\nonumber
	&\hspace{3mm}\hspace{-9mm}=(\cup^{l}_{q=0}L^{\prime}_{q})\setminus(A^{\prime}_{l}(S\setminus U)\cup A^{\prime}_{l}(S\cap U)),\\
	\nonumber
	&\hspace{3mm}\hspace{-9mm}=((\cup^{l}_{q=0}L^{\prime}_{q})\setminus A^{\prime}_{l}(S\cap U))\setminus A^{\prime}_{l}(S\setminus U),\\
	\label{induction17}
	&\hspace{3mm}\hspace{-9mm}=\tilde{A}^{\prime}_{l}(S\cap U)\setminus A^{\prime}_{l}(S\setminus U),
\end{align}
where the equalities follow from normal set-theoretic operations and (\ref{Atilde}).  Furthermore, 
\begin{align}
	\nonumber
	S_{1}\cup S_{3}&=A^{\prime}_{l}(S\setminus U)\cup(A^{\prime}_{l}(S\cap U)\setminus A^{\prime}_{l}(S\setminus U))\\
	\label{induction18}
	&=A^{\prime}_{l}(S\cap U)\cup A^{\prime}_{l}(S\setminus U).
\end{align}
Since the premises of Claim \ref{induction}.1 are satisfied in (\ref{induction16}), (\ref{induction17}), and (\ref{induction18}), invoking Claim \ref{induction}.1 proves (\ref{induction10}).  To prove (\ref{induction11}) observe:
\begin{align}
\nonumber
&\hspace{-2mm}\tilde{A}^{\prime}_{l}(S\cap U)\setminus A^{\prime}_{l}(S\setminus U)\\
\label{induction12}
&\hspace{14mm}=((\cup^{l}_{q=0}L^{\prime}_{q})\setminus A^{\prime}_{l}(S\cap U))\setminus A^{\prime}_{l}(S\setminus U),\\
\label{induction13}
&\hspace{14mm}=(\cup^{l}_{q=0}L^{\prime}_{q})\setminus(A^{\prime}_{l}(S\cap U)\cup A^{\prime}_{l}(S\setminus U)),\\
\label{postinduction13}
&\hspace{14mm}=\tilde{A}^{\prime}_{l}(S),
\end{align}
where (\ref{induction12}) follows from (\ref{Atilde}) and (\ref{induction13}) follows from normal set-theoretic operations.  The definition in (\ref{A}) implies that $i\in A^{\prime}_{l}(S)$ if and only if $i\in A^{\prime}_{l}(S\setminus U)\cup A^{\prime}_{l}(S\cap U)$.  Therefore $A^{\prime}_{l}(S)=A^{\prime}_{l}(S\setminus U)\cup A^{\prime}_{l}(S\cap U)$ which implies (\ref{postinduction13}). Therefore (\ref{induction11}) follows from (\ref{postinduction13}).  Since  $R_{S}=R_{S\setminus U}+R_{S\cap U}$, Claim \ref{induction}.4 follows from (\ref{lemmaVprep2}), (\ref{lemmaVprep1}), and  (\ref{induction11}).
\end{proof} 
Two more preliminary claims are required to justify (\ref{lemmaV2}).

 \hspace{-3.5mm}{\bf Claim \ref{induction}.6}   \textit{For any $S\subseteq{\cal S}(d)$ and every $l=0,\ldots,|{\bf L}^{\prime}_{d}|-1$:}
\begin{align}
\nonumber
	&\hspace{-2mm}h[A^{\prime}_{l}(S\cap U)\setminus A^{\prime}_{l}(S\setminus U)|\tilde{A}^{\prime}_{l}(S\cap U)\setminus A^{\prime}_{l}(S\setminus U)]=\\
	\label{induction19}
	&\hspace{15mm}h[A_{l-1}(S\cap U)|\tilde{A}^{\prime}_{l}(S\setminus U)\setminus A_{l-1}(S\cap U)]
\end{align}
\begin{proof}
First we show that $A_{l-1}(S\cap U)=A^{\prime}_{l}(S\cap U)\setminus A^{\prime}_{l}(S\setminus U)$.  Lemma \ref{join} implies that $A_{l-1}(S\cap U)=A^{\prime}_{l}(S\cap U)$.  Moreover, Lemma \ref{remain} implies $A^{\prime}_{l}(S\setminus U)\subseteq A_{l}(S\setminus U)$.  Now (\ref{A}) implies $A_{l-1}(S\cap U)\subseteq L_{l-1}$ and $A_{l}(S\setminus U)\subseteq L_{l}$.  Furthermore, (L2) implies $L_{l-1}\cap L_{l}=\{\}$.  It follows that $A_{l-1}(S\cap U)\cap A^{\prime}_{l}(S\setminus U)=\{\}$.  Therefore $A_{l-1}(S\cap U)=A^{\prime}_{l}(S\cap U)\setminus A^{\prime}_{l}(S\setminus U)$.  

Next we show that $\tilde{A}^{\prime}_{l}(S\cap U)\setminus A^{\prime}_{l}(S\setminus U)=\tilde{A}^{\prime}_{l}(S\setminus U)\setminus A_{l-1}(S\cap U)$.  Observe that: 
\begin{align}
\nonumber
&\tilde{A}^{\prime}_{l}(S\cap U)\setminus A^{\prime}_{l}(S\setminus U)\\
\label{insert1}
&\hspace{7mm}:=((\cup^{l}_{q=0}L^{\prime}_{q})\setminus(A^{\prime}_{l}(S\cap U))\setminus A^{\prime}_{l}(S\setminus U),\\
\label{insert2}
&\hspace{7mm}=(\cup^{l}_{q=0}L^{\prime}_{q})\setminus(A^{\prime}_{l}(S\cap U)\cup A^{\prime}_{l}(S\setminus U)),\\
\label{insert3}
&\hspace{7mm}=(\cup^{l}_{q=0}L^{\prime}_{q})\setminus(A_{l-1}(S\cap U)\cup A^{\prime}_{l}(S\setminus U)),\\
\label{insert4}
&\hspace{7mm}=((\cup^{l}_{q=0}L^{\prime}_{q})\setminus A^{\prime}_{l}(S\setminus U))\setminus(A_{l-1}(S\cap U),\\
\label{insert5}
&\hspace{7mm}=\tilde{A}^{\prime}_{l}(S\setminus U)\setminus A_{l-1}(S\cap U),
\end{align}
where (\ref{insert1}) follows from the definition in (\ref{Atilde}), (\ref{insert2}) follows from normal set-theoretic operations, (\ref{insert3}) follows from Lemma \ref{join}, (\ref{insert4}) follows from normal set-theoretic operations, and (\ref{insert5}) follows from the definition in (\ref{Atilde}), thus proving the claim.  
\end{proof} 

 \hspace{-3.5mm}{\bf Claim \ref{induction}.7}   \textit{For any $S\subseteq{\cal S}(d)$ and every $l=0,\ldots,|{\bf L}^{\prime}_{d}|-1$:}
\begin{align}
	\nonumber
	&\hspace{-5mm}h[A_{l-1}(S\cap U)|\tilde{A}^{\prime}_{l}(S\setminus U)\setminus A_{l-1}(S\cap U)]\geq\\
	&\hspace{25mm}h[A_{l-1}(S\cap U)|\tilde{A}_{l-1}(S\cap U)].
\end{align}
\begin{proof}
Consider the following sequence of inequalities:
\begin{align}
	\nonumber
	&\hspace{-2mm}\tilde{A}^{\prime}_{l}(S\setminus U)\setminus A_{l-1}(S\cap U)\\
	\label{induction21}
	&\hspace{1mm}=((\cup^{l}_{q=0}L^{\prime}_{q})\setminus A^{\prime}_{l}(S\setminus U))\setminus A_{l-1}(S\cap U),\\
	\label{induction22}
	&\hspace{1mm}=(\cup^{l}_{q=0}L^{\prime}_{q})\setminus (A^{\prime}_{l}(S\setminus U)\cup A_{l-1}(S\cap U)),\\
	\nonumber
	&\hspace{1mm}=((\cup^{l}_{q=0}L_{q})\setminus A_{l}(U))\\
	\label{induction23}
	&\hspace{29mm}\setminus(A^{\prime}_{l}(S\setminus U)\cup A_{l-1}(S\cap U)),\\
	\nonumber
	&\hspace{1mm}=(\cup^{l}_{q=0}L_{q})\\
	\label{induction24}
	&\hspace{17mm}\setminus(A_{l}(S\setminus U)\cup A_{l-1}(S\cap U)\cup A_{l}(U)),\\
	\label{induction25}
	&\hspace{1mm}\supseteq (\cup^{l-1}_{q=0}L_{q})\setminus A_{l-1}(S\cap U),\\
	\label{induction26}
	&\hspace{1mm}=\tilde{A}_{l-1}(S\cap U),
\end{align}
where (\ref{induction21}) follows from the definition in (\ref{Atilde}), (\ref{induction22}) follows from standard set-theoretic operations, (\ref{induction23}) follows by substituting (\ref{subclaim}) into (\ref{induction22}), (\ref{induction24}) follows because Lemma \ref{remain} implies $A^{\prime}_{l}(S\setminus U)=A_{l}(S\setminus U)$, (\ref{induction25}) follows because (\ref{A}) implies $A_{l}(U)\cup A_{l}(S\setminus U)\subseteq L_{l}$, and (\ref{induction26}) follows from the definition of $\tilde{A}_{l}(\hspace{1mm}\cdot\hspace{1mm})$ in (\ref{Atilde}).  Since removing independent conditional random variables reduces the mutual information, Claim \ref{induction}.7 follows from (\ref{induction26}).
\end{proof}
Combining Claim \ref{induction}.6 and Claim \ref{induction}.7 shows that  (\ref{lemmaV1}) implies:
\begin{align}
	\nonumber
	&\hspace{-3mm}R_{S\cap U}>\displaystyle\sum_{i\in S\cap U}H(X_{i}\hat{Y}_{i})\\
	\label{induction27}
	&\hspace{16mm}+\displaystyle\sum^{|{\bf L}^{\prime}_{d}|-1}_{l=0}h[A_{l-1}(S\cap U)|\tilde{A}_{l-1}(S\cap U)].
\end{align}
Observe that (\ref{shift2}) implies $|{\bf L}_{d}|\leq|{\bf L}^{\prime}_{d}|\leq|{\bf L}_{d}|+1$.  Since $ l\leq|{\bf L}_{d}^{\prime}|-1$ it follows that $l-1\leq|{\bf L}^{\prime}_{d}|-2\leq|{\bf L}_{d}|-1$.  Substituting $l^{\prime}=l-1$ into (\ref{induction27}) and relabelling $l^{\prime}=l$ yields (\ref{lemmaV2}) which completes the proof of Lemma \ref{induction}.
\end{proof}

Given a sequence of layerings $\{{\bf L}_{d,n}:n\in\mathbb{N}\}$, let ${\bf L}_{d,n}:=(L_{0},L_{1},\ldots,L_{|{\bf L}_{d,n}|-1})$ be a layering that satisfies (L1)-(L5).  For every $S\subseteq{\cal S}(d)$, let:
\begin{align}
	\label{aln}
	A_{l,n}(S)&:=S\cap L_{l,n},\\
	\label{alntilde}
	\tilde{A}_{l,n}(S)&:=(\cup^{l}_{q=0}L_{q,n})\setminus A_{l,n}(S),
\end{align}
For every $S\subseteq{\cal S}(d)$,  let ${\cal\hat{R}}({\bf L}_{d,n})$ denote the set of rate vectors ${\bf\hat{R}}$ that satisfy:  
\begin{align}
	\nonumber
	\hspace{-2.2mm}\hat{R}_{S}&<\sum_{i\in S}H(X_{i}\hat{Y}_{i})\\
	\label{precompdecomp}
	&\hspace{-4mm}-\sum^{|{\bf L}_{d,n}|}_{l=0}H(X_{A_{l,n}(S)}\hat{Y}_{A_{l-1,n}(S)}|X_{\tilde{A}_{l,n}(S)}\hat{Y}_{\tilde{A}_{l-1,n}(S)}Y_{d}),\\
	\label{compdecomp}
	&=\sum_{i\in S}H(X_{i}\hat{Y}_{i})-\sum^{|{\bf L}_{d,n}|}_{l=0}h[A_{l,n}(S)|\tilde{A}_{l,n}(S)],
\end{align}   
where (\ref{compdecomp}) follows from (\ref{precompdecomp}) and (\ref{hdef}).  For some fixed ${\bf\hat{R}}\in{\cal\hat{R}}_{d}$, let $U_{n}\subset{\cal S}(d)$ denote the largest set that violates (\ref{compdecomp}) with respect to ${\bf L}_{d,n}$ and ${\bf\hat{R}}$, and $Z_{n}\subseteq{\cal S}(d)$ denote the set in which all subsets $S\subseteq Z_{n}$ satisfy (\ref{compdecomp}) with respect to ${\bf L}_{d,n}$ and ${\bf\hat{R}}$.

Let ${\bf L}_{d,n+1}:=\text{\sc shift}({\bf L}_{d,n},U_{n})$, and define the $\text{\sc shift}(\cdot\hspace{1mm},\cdot)$ operator as follows.  Let $\text{\sc layer}_{n}(\cdot)$ correspond to the layering ${\bf L}_{d,n}$, and for every $i\in {\cal S}(d)$:
\begin{align}
\label{shift4}
\text{\sc layer}_{n+1}(i)=\begin{cases}l&i\in A_{l,n}({\cal N})\setminus A_{l,n}(U_{n}),\\ l+1 & i\in A_{l,n}(U_{n}).\end{cases}	
\end{align}
For every $n\in\mathbb{N}$, the pair $(U_{n},Z_{n})$ satisfies:
\begin{align}
	\label{inductionupdate}
	Z_{n+1}=({\cal S}(d)\setminus U_{n})\cup Z_{n},
\end{align} 
where (\ref{inductionupdate}) follows from Lemma \ref{induction}.

\begin{lemma}
	\label{grandefinale}
	${\bf\hat{R}}\in{\cal\hat{R}}({\bf L}_{d,n^{*}})$ for some $n^{*}\in\mathbb{N}$.
\end{lemma}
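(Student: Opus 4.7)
The plan is to iterate Lemma~\ref{induction} and use the monotone growth of $Z_n$ together with the shift dynamics to force convergence within finitely many steps. The recursion (\ref{inductionupdate}) gives $Z_n\subseteq Z_{n+1}$, so $\{Z_n\}$ ascends in the finite inclusion lattice $2^{{\cal S}(d)}$. Setting $W_n:={\cal S}(d)\setminus Z_n$, the dual recursion $W_{n+1}=U_n\cap W_n$ shows $W_n$ is monotone decreasing. Both sequences therefore stabilize at some $Z^*$ and $W^*$ for all $n\geq n^*$.

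If $W^*=\emptyset$, then $Z_{n^*}={\cal S}(d)$ and every $S\subseteq{\cal S}(d)$ satisfies (\ref{compdecomp}) with respect to ${\bf L}_{d,n^*}$, giving $\hat{\bf R}\in\hat{\cal R}({\bf L}_{d,n^*})$ as required. To rule out $W^*\neq\emptyset$, I would argue as follows. Stabilization $W_{n+1}=W_n=W^*$ forces $W^*\subseteq U_n$, and the maximality of $U_n$ among violating subsets combined with the disjointness $U_n\cap Z_n=\emptyset$ (which follows from Lemma~\ref{induction}'s characterization $Z_{n+1}=({\cal S}(d)\setminus U_n)\cup Z_n$ applied at the stable point) pins $U_n=W^*$ for every $n\geq n^*$. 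Each subsequent shift is thus $\text{\sc shift}({\bf L}_{d,n},W^*)$, which by (\ref{shift4}) moves every node of $W^*$ one layer deeper while leaving nodes of $Z^*$ fixed.

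After finitely many iterations, the layers occupied by nodes of $W^*$ are all strictly deeper than every layer containing a node of $Z^*$, with empty layers separating them. I would then apply the chain rule to show that the sum $\sum_{l=0}^{|{\bf L}_{d,n}|}h[A_{l,n}(W^*)|\tilde{A}_{l,n}(W^*)]$ telescopes to $H(X_{W^*}\hat{Y}_{W^*}|X_{Z^*}\hat{Y}_{Z^*}Y_d)$, at which point the constraint (\ref{compdecomp}) for $S=W^*$ coincides exactly with the outer-bound constraint (\ref{boundary}) for $S=W^*$. Since $\hat{\bf R}\in\hat{\cal R}_d$ satisfies (\ref{boundary}) by hypothesis, $W^*$ is no longer a violator of (\ref{compdecomp}), contradicting $U_n=W^*$. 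Hence $W^*=\emptyset$ and the lemma follows.

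The main obstacle is the pair of sub-verifications underlying the contradiction: first, establishing $U_n=W^*$ at stabilization---that is, $U_n$ cannot ``overflow'' into $Z^*$---which needs the submodular structure of the entropy terms in (\ref{compdecomp}) together with Lemma~\ref{induction}'s characterization of $Z_{n+1}$; and second, the telescoping chain-rule collapse of (\ref{compdecomp}) onto (\ref{boundary}) under the separated layering, which amounts to tracking $A_{l,n}(W^*)$ and $\tilde{A}_{l,n}(W^*)$ through the empty layers inserted between $Z^*$ and $W^*$ by repeated shifts and applying the chain rule layer by layer.
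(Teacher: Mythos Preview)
Your overall contradiction strategy---let the monotone $Z_n$ stabilize, separate the persistent violator into deep layers, and collapse its constraint (\ref{compdecomp}) to the boundary form (\ref{boundary})---is the same as the paper's. The genuine gap is exactly the obstacle you flag but do not resolve: the claim $U_n=W^*$ after stabilization. Applying (\ref{inductionupdate}) at the fixed point $Z^*=({\cal S}(d)\setminus U_n)\cup Z^*$ yields only ${\cal S}(d)\setminus U_n\subseteq Z^*$, i.e.\ $W^*\subseteq U_n$; it says nothing about $U_n\cap Z^*$. The defining property of $Z_n$ is that every \emph{subset} of $Z_n$ satisfies (\ref{compdecomp}); since $U_n\supseteq W^*$ is never such a subset, nothing forbids the largest violator $U_n$ from containing nodes of $Z^*$, and the appeal to ``submodular structure'' does not pin this down. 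If $U_n\cap Z^*\neq\emptyset$, those $Z^*$-nodes are shifted every step along with $W^*$, their layers diverge too, and your layer separation between $W^*$ and $Z^*$ fails---so the telescoping for $S=W^*$ does not collapse cleanly to (\ref{boundary}).

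The paper circumvents this by never asserting that $U_n$ stabilizes. It works instead with $U^*:=\liminf_n U_n$, the set of nodes eventually in every $U_n$, and shows $U^*\neq\emptyset$ whenever the lemma fails. The layer-divergence argument is then run for $U^*$ versus ${\cal S}(d)\setminus U^*$ (not $W^*$ versus $Z^*$), which is legitimate because permanent membership in $U_n$ is exactly what drives a node's layer to infinity under (\ref{shift4}). Since $U_n$ need not equal $U^*$, the violated inequality for $U_n$ is split into a $U^*$ piece and a $U_n\setminus U^*$ piece; the latter is controlled because $U_n\setminus U^*\subseteq{\cal S}(d)\setminus U^*\subseteq Z_n$ for large $n$, and the former collapses via the chain rule to the boundary term for $S=U^*$, contradicting $\hat{\bf R}\in\hat{\cal R}_d$. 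The missing idea, then, is not to force $U_n=W^*$ but to carry the residual $U_n\setminus U^*$ through the estimate and absorb it with the $Z_n$-bound.
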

\begin{proof}
	The proof is by contradiction.  We prove some initial claims, but first introduce a classical definition of the limit inferior of any sequence of sets $\{S_{n}:n\in\mathbb{N}\}$.
\begin{align}
	\label{liminf}
	\liminf_{n} S_{n}&:=\cup^{\infty}_{n=1}\cap^{\infty}_{q=n}S_{q}.
\end{align}
{\bf Claim \ref{grandefinale}.1} \textit{$\liminf\limits_{n} U_{n}\neq\{\}$.}
\begin{proof}
The proof is by contradiction.  Suppose $\liminf_{n}U_{n}=\{\}$ and Lemma \ref{grandefinale} is false.  If Lemma \ref{grandefinale} is false then $Z_{n}\neq{\cal S}(d)$ for all $n\in\mathbb{N}$.  For every $s\in{\cal S}(d)$, it follows from (\ref{liminf}) and the hypothesis $\liminf_{n}U_{n}=\{\}$ that some $n_{s}$ exists such that $s\notin U_{n_{s}}$.  Therefore (\ref{inductionupdate}) implies that $s\in Z_{n_{s}+1}$.  Moreover, (\ref{inductionupdate}) also implies $Z_{n}\subseteq Z_{n+1}$ for all $n\in\mathbb{N}$.  It follows that $Z_{n}={\cal S}(d)$, for all $n\geq\max_{s\in{\cal S}(d)}n_{s}$, which contradicts the hypothesis that $Z_{n}\neq{\cal S}(d)$ for all $n\in\mathbb{N}$.
\end{proof}
Define: 
\begin{align}
\label{U}
U^{*}&:=\liminf\limits_{n} U_{n}.
\end{align}

\hspace{-3mm}{\bf Claim \ref{grandefinale}.3} $\liminf\limits_{n}(U_{n}\setminus U^{*})=\{\}$.
\begin{proof}
The proof is by contradiction.  Suppose $\liminf_{n}(U_{n}\setminus U^{*})\neq\{\}$.  For some $s\in{\cal S}(d)$ and $n_{s}\in\mathbb{N}$, it follows that $s\in U_{n}\setminus U^{*}$ for all $n\geq n_{s}$.  Therefore $s\in\cap_{q\geq n_{s}} U_{q}$.  It follows from (\ref{liminf}) and (\ref{U}) that $s\in U^{*}$ which is a contradiction.
\end{proof}
\hspace{-3.5mm}{\bf Claim \ref{grandefinale}.6} \textit{For some function $l(n)$, $K_{1}\in\mathbb{N}$, and all $n\geq K_{1}$:}
\begin{align}
	\label{FDU}
	U^{*}&\subseteq\displaystyle\cup^{|{\bf L}_{d,n}|-1}_{q=l(n)}L_{q,n},\\
	\label{FDU2}
	{\cal S}(d)\setminus U^{*}&\subseteq\cup^{l(n)}_{q=0}L_{q,n}.
\end{align}
\begin{proof}
We introduce some preliminary notation.  For all $i,j\in {\cal S}(d)$ define:
\begin{align}
	\label{enij}
	e_{n}(i,j):=\text{\sc layer}_{n}(i)-\text{\sc layer}_{n}(j).
\end{align}
To prove (\ref{FDU}) and (\ref{FDU2}), it suffices to establish the following limits.  If $i,j\in U^{*}$ then:
\begin{align}
	\label{limenij1}
	\lim_{n\rightarrow\infty}e_{n}(i,j)&=e(i,j),
\end{align}
where $e(i,j)$ is a constant with respect to $n$.  If $i\in U^{*}$ and $j\in {\cal S}(d)\setminus U^{*}$ then:
\begin{align}
	\label{limenij2}
	\lim_{n\rightarrow\infty}e_{n}(i,j)&=\infty.
\end{align}
The definition of $U^{*}$ in (\ref{U}) implies that $U^{*}\subseteq U_{n}$ for all $n\geq K_{1}$ and some $K_{1}\in\mathbb{N}$.  For all $i\in U^{*}$ and $n\geq K_{1}$, it follows that:
\begin{align}
	\label{grandfinale81}
	\text{\sc layer}_{n+1}(i)&=\text{\sc layer}_{n}(i)+1,
\end{align}
where (\ref{grandfinale81}) follows from (\ref{shift4}) and the fact that $U^{*}\subseteq U_{n}$ for all $n\geq K_{1}$.  First, we justify (\ref{limenij1}).  Fix $i,j\in U^{*}$.  For all $n\geq K_{1}$:
\begin{align}
	\label{grandfinale84}
	e_{n+1}(i,j)&=\text{\sc layer}_{n+1}(i)-\text{\sc layer}_{n+1}(j),\\
	\label{grandfinale85}
	&=(\text{\sc layer}_{n}(i)+1)+(\text{\sc layer}_{n}(j)+1),\\
	\nonumber
	&=\text{\sc layer}_{n}(i)-\text{\sc layer}_{n}(j),\\
	\label{grandfinale87}
	&=e_{n}(i,j)
\end{align}
where (\ref{grandfinale84}) follows from (\ref{enij}), (\ref{grandfinale85}) follows from (\ref{grandfinale81}) and because $n\geq K_{1}$, and (\ref{grandfinale87}) follows from (\ref{enij}). Set $e(i,j):=e_{n}(i,j)$ for $n=K_{1}$.  Now (\ref{grandfinale87}) implies that $e_{n}(i,j)=e(i,j)$ for all $n\geq K_{1}$, which proves (\ref{limenij1}).

Now we justify (\ref{limenij2}).  Fix $i\in U^{*}$ and $j\in{\cal S}(d)\setminus U^{*}$.  Invoking Claim \ref{grandefinale}.3, we construct an infinite sequence $\{n_{q}:q\in\mathbb{N}\}$ where $n_{0}=K_{1}$ such that $j\notin U_{n_{q}}\setminus U^{*}$.  First we show that $j$ satisfies:
\begin{align}
	\label{grandfinale83}
	\text{\sc layer}_{n_{q}+1}(j)=\text{\sc layer}_{n_{q}}(j)\hspace{2mm}\text{for all}\hspace{1mm}q\in\mathbb{N}.
\end{align}
Since $j\in{\cal S}(d)\setminus U^{*}$, it follows that $j\notin U^{*}$.  Since $j\notin U_{n_{q}}\setminus U^{*}$ and $j\notin U^{*}$, it follows that $j\notin U_{n_{q}}$.  Therefore (\ref{grandfinale83}) follows from (\ref{shift4}).  We have the following:  
\begin{align}
	\label{grandfinale93}
	e_{n_{q}+1}(i,j)&=\text{\sc layer}_{n_{q}+1}(i)-\text{\sc layer}_{n_{q}+1}(j),\\
	\label{grandfinale94}
	&=(\text{\sc layer}_{n_{q}}(i)+1)-\text{\sc layer}_{n_{q}+1}(j),\\
	\label{pregrandfinale95}
	&=(\text{\sc layer}_{n_{q}}(i)+1)-\text{\sc layer}_{n_{q}}(j),\\
	\nonumber
	&=\text{\sc layer}_{n_{q}}(i)-\text{\sc layer}_{n_{q}}(j)+1,\\
	\label{grandfinale96}
	&=e_{n_{q}}(i,j)+1,
\end{align}
where (\ref{grandfinale93}) follows from (\ref{enij}), (\ref{grandfinale94}) follows from (\ref{grandfinale81}) and because $i\in U^{*}$ and $n_{0}=K_{1}$, (\ref{pregrandfinale95}) follows from (\ref{grandfinale83}), and (\ref{grandfinale96}) follows from (\ref{enij}).  More generally, for all $n\geq K_{1}$:
\begin{align}
	\label{grandfinale97}
	e_{n+1}(i,j)&=\text{\sc layer}_{n+1}(i)-\text{\sc layer}_{n+1}(j),\\
	\label{grandfinale98}
	&=(\text{\sc layer}_{n}(i)+1)-\text{\sc layer}_{n+1}(j),\\
	\label{grandfinale99}
	&\geq(\text{\sc layer}_{n}(i)+1)-(\text{\sc layer}_{n}(j)+1),\\
	\nonumber
	&=\text{\sc layer}_{n}(i)-\text{\sc layer}_{n}(j),\\
	\label{grandfinale100}
	&=e_{n}(i,j),
\end{align}
where (\ref{grandfinale97}) follows from (\ref{enij}), (\ref{grandfinale98}) follows from (\ref{grandfinale81}) and because $n\geq K_{1}$, (\ref{grandfinale99}) follows because (\ref{shift4}) implies $\text{\sc layer}_{n}(j)\leq\text{\sc layer}_{n+1}(j)\leq\text{\sc layer}_{n}(j)+1$, and (\ref{grandfinale100}) follows from (\ref{enij}).  Combining (\ref{grandfinale96}) and (\ref{grandfinale100}) implies $\lim_{n\rightarrow\infty}e_{n}(i,j)=\infty$ if $i\in U^{*}$ and $j\in {\cal S}(d)\setminus U^{*}$, which proves (\ref{limenij2}).
\end{proof}
We can now complete the proof of Lemma \ref{grandefinale}.  The proof is by contradiction.  If the Lemma is false, then Claim \ref{grandefinale}.1 implies there exists some non-empty $U^{*}$ defined by (\ref{U}) for all $n\geq K_{1}$ and some $K_{1}\in\mathbb{N}$.  For some function $l(n)$, all $q\in\{l(n)+1,\ldots,|{\bf L}_{d,n}|-1\}$ and all $n\geq K_{1}$, Claim \ref{grandefinale}.6 implies:   
\begin{align}
	\label{aln2}
	A_{q,n}(U^{*})&=L_{q,n}.
\end{align}
To justify (\ref{aln2}), suppose by contradiction, there is some $i\in L_{q,n}$ and $q\in\{l(n)+1,\ldots,|{\bf L}_{d,n}|-1\}$ such that $i\notin A_{q,n}(U^{*})$ for some $n\geq K_{1}$.  If $i\notin A_{q,n}(U^{*})$ then (\ref{aln}) implies $i\notin U^{*}$ because $A_{q,n}(U^{*}):=L_{q,n}\cap U^{*}$.  Since  $i\in{\cal S}(d)\setminus U^{*}$, it follows from(\ref{FDU2}) that $i\in\cup^{l(n)}_{q=0}L_{q,n}$ which contradicts the hypothesis.  
For all $q\in\{l(n)+1,\ldots,|{\bf L}_{d,n}|-1\}$ and $n\geq K_{1}$: 
\begin{align}
	\nonumber
	&\hspace{-3mm}\tilde{A}_{q,n}(U^{*})\\
	\label{newgrandefinale2}
	&\hspace{5mm}=(\cup^{q}_{k=0}L_{k,n})\setminus A_{q,n}(U^{*}),\\
	\label{newgrandefinale3}
	&\hspace{5mm}=(\cup^{l(n)}_{k=0}L_{k,n})\cup(\cup^{q}_{k=l(n)+1}L_{k,n})\setminus A_{q,n}(U^{*}),\\
	\label{newgrandefinale4}
	&\hspace{5mm}=({\cal S}(d)\setminus U^{*})\cup(\cup^{q}_{k=l(n)+1}L_{k,n})\setminus A_{q,n}(U^{*}),\\
	\label{newgrandefinale6}
	&\hspace{5mm}=({\cal S}(d)\setminus U^{*})\cup(\cup^{q-1}_{k=l(n)+1}A_{k,n}(U^{*})),
\end{align}
where (\ref{newgrandefinale2}) follows from (\ref{alntilde}), (\ref{newgrandefinale3}) follows because $q\in\{l(n)+1,\ldots,|{\bf L}_{d,n}|-1\}$ by assumption, (\ref{newgrandefinale4}) follows from (\ref{FDU2}), and (\ref{newgrandefinale6}) follows from (\ref{aln2}).  

For every $s\notin U^{*}$, Claim \ref{grandefinale}.3 implies $s\notin U_{n_{s}}\setminus U^{*}$ for some $n_{s}\in\mathbb{N}$.  Therefore $s\notin U_{n_{s}}$.  Lemma \ref{induction} implies $s\in Z_{n_{s}+1}$.  Define $K_{2}:=\max_{s\notin U^{*}}n_{s}+1$.  Since Lemma \ref{induction} also implies $Z_{n}\subseteq Z_{n+1}$ for all $n\in\mathbb{N}$, it follows that $({\cal S}(d)\setminus U^{*})\subseteq Z_{n}$ for all $n\geq K_{2}$.  For all $n\geq\max\{K_{1},K_{2}\}$:
\begin{align}
	\nonumber
	\hat{R}_{U_{n}\setminus U^{*}}&<\sum_{i\in U_{n}\setminus U^{*}}H(X_{i}\hat{Y}_{i})\\
	\label{newgrandfinale12}
	&\hspace{0.5mm}+\sum^{|{\bf L}_{d,n}|-1}_{q=0}h[A_{q,n}(U_{n}\setminus U^{*})|\tilde{A}_{q,n}(U_{n}\setminus U^{*})],\\
	\nonumber
	&=\sum_{i\in U_{n}\setminus U^{*}}H(X_{i}\hat{Y}_{i})\\
	\label{newgrandfinale13}
	&\hspace{3.5mm}+\sum^{l(n)}_{q=0}h[A_{q,n}(U_{n}\setminus U^{*})|\tilde{A}_{q,n}(U_{n}\setminus U^{*})],
\end{align}
where (\ref{newgrandfinale12}) follows because $(U_{n}\setminus U^{*})\subseteq({\cal S}(d)\setminus U^{*})\subseteq Z_{n}$ and all subsets of $Z_{n}$ satisfy (\ref{compdecomp}) by definition, and (\ref{newgrandfinale13}) follows from (\ref{FDU2}).  For all $n\geq\max\{K_{1},K_{2}\}$:
\begin{align}
	\nonumber
	\hat{R}_{U_{n}}&>\sum_{i\in U_{n}}H(X_{i}\hat{Y}_{i})\\
	\label{newgrandfinale7}
	&\hspace{4mm}+\sum^{|{\bf L}_{d,n}|-1}_{q=0}h[A_{q,n}(U_{n})|\tilde{A}_{q,n}(U_{n})],\\
	\nonumber
	&>\sum_{i\in U_{n}\setminus U^{*}}H(X_{i}\hat{Y}_{i})+\sum_{i\in U^{*}}H(X_{i}\hat{Y}_{i})\\
	\nonumber
	&\hspace{8mm}+\sum^{l(n)}_{q=0}h[A_{q,n}(U_{n})|\tilde{A}_{q,n}(U_{n})]\\
	\label{newgrandfinale8}
	&\hspace{12mm}+\sum^{|{\bf L}_{d,n}|-1}_{q=l(n)+1}h[A_{q,n}(U_{n})|\tilde{A}_{q,n}(U_{n})],\\
	\nonumber
	&>\sum_{i\in U_{n}\setminus U^{*}}H(X_{i}\hat{Y}_{i})+\sum_{i\in U^{*}}H(X_{i}\hat{Y}_{i})\\
	\nonumber
	&\hspace{8mm}+\sum^{l(n)}_{q=0}h[A_{q,n}(U_{n})|\tilde{A}_{q,n}(U_{n})]\\
	\label{newgrandfinale9}
	&\hspace{12mm}+\sum^{|{\bf L}_{d,n}|-1}_{q=l(n)+1}h[A_{q,n}(U^{*})|\tilde{A}_{q,n}(U^{*})],\\
	\nonumber
	&>\sum_{i\in U_{n}\setminus U^{*}}H(X_{i}\hat{Y}_{i})+\sum_{i\in U^{*}}H(X_{i}\hat{Y}_{i})\\
	\nonumber
	&\hspace{8mm}+\sum^{l(n)}_{q=0}h[A_{q,n}(U_{n}\setminus U^{*})|\tilde{A}_{q,n}(U_{n}\setminus U^{*})]\\
	\label{newgrandfinale9a}
	&\hspace{12mm}+\sum^{|{\bf L}_{d,n}|-1}_{q=l(n)+1}h[A_{q,n}(U^{*})|\tilde{A}_{q,n}(U^{*})],\\
	\nonumber
	&>\sum_{i\in U_{n}\setminus U^{*}}H(X_{i}\hat{Y}_{i})\\
	\nonumber
	&\hspace{4mm}+\sum^{l(n)}_{q=0}h[A_{q,n}(U_{n}\setminus U^{*})|\tilde{A}_{q,n}(U_{n}\setminus U^{*})]\\
	\label{newgrandfinale10}
	&\hspace{8mm}+\sum_{i\in U^{*}}H(X_{i}\hat{Y}_{i})+h[U^{*}|({\cal S}(d)\setminus U^{*})],\\
	\nonumber
	&=\sum_{i\in U^{*}}H(X_{i}\hat{Y}_{i})+H(X_{U^{*}}\hat{Y}_{U^{*}}|X_{{\cal S}(d)\setminus U^{*}}\hat{Y}_{{\cal S}(d)\setminus U^{*}}Y_{d})\\
	\nonumber
	&\hspace{4mm}+\sum_{i\in U_{n}\setminus U^{*}}H(X_{i}\hat{Y}_{i})\\
	\label{newgrandfinale11}
	&\hspace{8mm}+\sum^{l(n)}_{q=0}h[A_{q,n}(U_{n}\setminus U^{*})|\tilde{A}_{q,n}(U_{n}\setminus U^{*})],
\end{align}
where (\ref{newgrandfinale7}) follows from the definition of $U_{n}$, (\ref{newgrandfinale8}) follows by splitting the sums in (\ref{newgrandfinale7}), (\ref{newgrandfinale9}) follows from (\ref{newgrandfinale8}) and (\ref{FDU}), (\ref{newgrandfinale9a}) follows from (\ref{newgrandfinale9}) and (\ref{FDU2}), (\ref{newgrandfinale10}) follows from the definition of $h[\hspace{0.5mm}\cdot\hspace{0.5mm}|\hspace{0.5mm}\cdot\hspace{0.5mm}]$ in (\ref{hdef}), (\ref{FDU2}), (\ref{newgrandefinale6}), and the chain rule, and (\ref{newgrandfinale11}) follows from the definition of $h[\hspace{0.5mm}\cdot\hspace{0.5mm}|\hspace{0.5mm}\cdot\hspace{0.5mm}]$ in (\ref{hdef}).

Now $\hat{R}_{U_{n}}=\hat{R}_{U_{n}\setminus U^{*}}+\hat{R}_{U^{*}}$.  Since $n\geq\max\{K_{1},K_{2}\}$ we can invoke (\ref{newgrandfinale13}).  Together (\ref{newgrandfinale13}) and (\ref{newgrandfinale11}) imply:
\begin{align}
	\nonumber
	\hat{R}_{U^{*}}&>\sum_{i\in U^{*}}H(X_{i}\hat{Y}_{i})\\
	\label{actualgrandefinale}
	&\hspace{10mm}+H(X_{U^{*}}\hat{Y}_{U^{*}}|X_{{\cal S}(d)\setminus U^{*}}\hat{Y}_{{\cal S}(d)\setminus U^{*}}Y_{d}).
\end{align}
By selection ${\bf\hat{R}}\in{\cal\hat{R}}_{d}$ satisfies (\ref{boundary}) for all $S\subseteq{\cal S}(d)$, which contradicts (\ref{actualgrandefinale}).  It follows that the non-empty $U^{*}$ defined in (\ref{U}) does not exist.  Therefore Claim \ref{grandefinale}.1 implies ${\bf\hat{R}}\in{\cal\hat{R}}({\bf L}_{d,n})$ for some $n\in\mathbb{N}$.  
\end{proof}

\section{Conclusion}
\label{conclusion}
An outerbound on the CF rate and compression rate-vector region derived from the one-relay channel was shown to be achievable in general networks, using regular decoding schemes with constant  encoding delays in the channel usage.  Regular coding avoids the exponential delays and restrictions on bidirectional communication in noisy network coding and backward decoding.  Layerings were introduced, which correspond to regular coding schemes.  The same shift operation in flow decomposition for DF schemes was used to find layerings that achieve any desired compression rate-vector in the outerbound.  This approach harmonizes the proofs for CF and DF schemes, setting the stage for a united CF-DF framework.  In separate work, we show that the shifting approach minimizes the complexity of the compression rate region.
\bibliographystyle{IEEEtran}
\bibliography{compressforward}
\end{document}